\def\SUM{\mathsf{SUM}}
\newtheorem{theorem}{Theorem}
\newtheorem{lemma}{Lemma}[section]
\newtheorem{claim}[lemma]{Claim}
\newtheorem{fact}[lemma]{Fact}
\theoremstyle{definition}
\newtheorem{remark}{Remark}
\theoremstyle{plain}
\newcommand{\E}{\operatorname{{\bf E}}}
\newcommand{\Ex}{\mathop{{\bf E}\/}}
\renewcommand{\Pr}{\operatorname{{\bf Pr}}}
\newcommand{\poly}{\mathrm{poly}}
\newcommand{\supp}{\mathrm{supp}}
\newcommand{\eps}{\varepsilon}
\renewcommand{\epsilon}{\eps}
\newcommand{\Sipser}{\mathsf{Sipser}}
\newcommand{\OR}{\mathsf{OR}}
\newcommand{\AND}{\mathsf{AND}}
\newcommand{\MOD}{\mathsf{MOD}}
\newcommand{\NOT}{\mathsf{NOT}}
\newcommand{\THR}{\mathsf{THR}}
\newcommand{\bS}{\mathbf{S}}
\newcommand{\bX}{\mathbf{X}}
\newcommand{\bY}{\mathbf{Y}}
\newcommand{\bt}{\boldsymbol{T}}
\newcommand{\bx}{\boldsymbol{x}}
\newcommand{\by}{\boldsymbol{y}}
\newcommand{\bz}{\boldsymbol{z}}
\newcommand{\bGamma}{{\mathbf{\Gamma}} }
\newcommand{\calD}{\mathcal{D}}
\newcommand{\calS}{\mathcal{S}}
\newtheorem*{rep@theorem}{\rep@title}
\newcommand{\newreptheorem}[2]{
\newenvironment{rep#1}[1]{
 \def\rep@title{#2 \ref{##1}}
 \begin{rep@theorem}\itshape}
 {\end{rep@theorem}}}
\theoremstyle{plain}
\def\colorful{1}
\newcommand{\ignore}[1]{}
\newtheorem*{theorem*}{Theorem}
\newtheorem*{noclaim*}{Claim}
\def\calD{\mathcal{D}} \def\zz{\mathbf{z}}
\newcommand{\uhr}{\upharpoonright}
\newcommand{\eqdef}{\stackrel{\rm def}{=}}
\def\yesd{\mathcal{YES}} \def\nod{\mathcal{NO}}
\def\11{\mathbf{1}} \def\00{\mathbf{0}}
\def\xx{\bx}
\def\yy{\by}
\def\zz{{\bz}}
\def\rr{\boldsymbol{R}}
\begin{document}

\thispagestyle{empty}

\title{Addition is exponentially harder than counting\\
for shallow monotone circuits}

\vspace{+6ex}

\author{Xi Chen\thanks{xichen@cs.columbia.edu.}
 \qquad Igor C. Oliveira\thanks{oliveira@cs.columbia.edu.} \qquad Rocco A. Servedio\thanks{rocco@cs.columbia.edu.  Supported in part by NSF grants CCF-1319788 and CCF-1420349.}\\~\\\small{Department of Computer Science}\\\small{Columbia University}\\~\\}

\maketitle

\vspace{-3ex}

\begin{abstract}
Let $U_{k,N}$ denote the Boolean function which takes as input $k$  strings of $N$ bits each, representing $k$ numbers $a^{(1)},\dots,a^{(k)}$ in $\{0,1,\dots,2^{N}-1\}$, and outputs 1 if and only if $a^{(1)} + \cdots + a^{(k)} \geq 2^N.$   Let $\THR_{t,n}$ denote a \emph{monotone unweighted threshold gate}, i.e., the Boolean function which takes as input a single string $x \in \{0,1\}^n$ and outputs $1$ if and only if $x_1 + \cdots + x_n \geq t$. The \mbox{function} $U_{k,N}$ may be viewed as a monotone function that performs addition, and $\THR_{t,n}$ may be viewed as a monotone function that performs counting. We refer to circuits that are composed of $\THR$ gates as \emph{monotone majority 
  circuits.}

The main result of this paper is an exponential lower bound on the size of bounded-depth 
monotone majority circuits that compute $U_{k,N}$.   More precisely, we show that for any constant $d \geq 2$, any depth-$d$   monotone majority circuit computing $U_{d,N}$ must\vspace{0.008cm} have size $\smash{2^{\Omega(N^{1/d})}}$.  Since $U_{k,N}$ can be computed by a single  monotone \emph{weighted} threshold gate (that uses exponentially large weights), our lower bound implies that constant-depth monotone majority circuits require exponential size to simulate monotone weighted~threshold gates. This answers a question posed by Goldmann and Karpinski (STOC'93) and recently restated by H\aa stad (2010, 2014).  \vspace{0.01cm}We also show that our lower bound is essentially best possible, by constructing a depth-$d$, size-$\smash{2^{O(N^{1/d})}}$ 
monotone majority circuit
for $U_{d,N}$.

As a corollary of our lower bound, we significantly strengthen a classical theorem in circuit complexity due to Ajtai and Gurevich (JACM'87). They exhibited a monotone function that~is in $\mathsf{AC}^0$ but requires super-polynomial size for any constant-depth monotone circuit composed~of unbounded fan-in $\AND$ and $\OR$ gates.  We describe a monotone function that is in depth-3 $\mathsf{AC}^0$ but requires \emph{exponential} size monotone circuits of any constant depth, even if the circuits are composed of $\THR$ gates.

\end{abstract}

 \thispagestyle{empty}
\spacing{1.03}

\newpage
\setcounter{page}{1}


\newpage

\section{Introduction.}\label{s:introduction}\vspace{0.05cm}

\begin{quote}
{\em{``And you do Addition?'' the White Queen asked. ``What's one and one and
one and one and one and one and one and one and one and one?''

``I don't know,'' said Alice. ``I lost count.''

``She can't do Addition,'' the Red Queen interrupted.}}

\hskip 2.8in~--- Lewis Carroll,  \em{Through the Looking Glass\vspace{0.06cm}}

\end{quote}

\noindent {\bf Threshold functions and threshold circuits.}
A Boolean function $f\colon \{0,1\}^n \to \{0,1\}$ is called a \emph{weighted threshold function} (also known as a halfspace, weighted majority, weighted threshold gate, or linear threshold function) if there exist integers $w_1, \ldots, w_n$ and $t$ such that
$$
f(x) = 1 \quad \Longleftrightarrow \quad \sum_{i = 1}^n w_i x_i \geq t.
$$
The parameters $w_1, \ldots, w_n$ are called \emph{weights}.
We say that a threshold function $f$ is \emph{unweighted} if $|w_i| = 1$ for every $i \in \{1, \ldots, n\}$, and that it is \emph{monotone} if every weight is non-negative.  (Thus a monotone unweighted threshold function is precisely a $\THR_{t,n}$ function described in the abstract.)

Threshold functions and their generalizations have been extensively investigated for decades (see e.g. Dertouzos \cite{Dertouzos}, Minsky and Papert \cite{DBLP:books/daglib/0066902}, and Muroga \cite{DBLP:books/daglib/0098527}), and arise in diverse areas including social choice theory (Taylor and Zwicker \cite{MR1092927}), circuit complexity (Aspnes et al. \cite{DBLP:journals/combinatorica/AspnesBFR94}), structural complexity (Beigel, Reingold, and Spielman \cite{DBLP:journals/jcss/BeigelRS95}), learning theory (Freund and Schapire \cite{DBLP:journals/jcss/FreundS97}), neural networks (Parberry \cite{Parberry}), cryptography (Naor and Reingold \cite{DBLP:journals/jacm/NaorR04}), and many others.

In this work, we consider Boolean circuits that are composed of gates that compute
  threshold functions (i.e., \emph{threshold gates}). (We refer to Jukna \cite{DBLP:books/daglib/0028687} as an extensive reference on Boolean functions and circuit complexity). 
While individual threshold gates may appear relatively simple, Boolean circuits composed of these gates (i.e., \emph{threshold circuits}) remain poorly understood~\mbox{despite} intensive study. For instance, it is a notorious and long-standing open problem in complexity theory~to prove the existence of a function in $\mathsf{NP}$ that cannot be computed by
  a depth-2 \mbox{circuit} with polynomially many weighted threshold gates. This difficulty can be explained in part by the surprising
  computational power of bounded-depth threshold circuits, both in theory and practice. On the theory side, such circuits~can efficiently implement all the basic arithmetic operations (see e.g., Table 1 in Sherstov \cite{DBLP:journals/ipl/Sherstov07}) and~can also simulate (in quasi-polynomial size and depth 3)  $\AND/\OR/\MOD_m$ Boolean circuits of much larger depth (Allender \cite{DBLP:conf/focs/Allender89} and Yao \cite{DBLP:conf/focs/Yao90}). On a more practical level, constant-depth networks of (continuous analogues of) threshold gates play a fundamental role in recent successful deep learning frameworks (see e.g., Schmidhuber \cite{DBLP:journals/nn/Schmidhuber15}). 

Despite our inability to prove strong lower bounds against threshold circuits, there have been some notable successes in understanding the relative power of weighted versus unweighted threshold~gates and circuits. Siu and Bruck \cite{DBLP:journals/siamdm/SiuB91} were the first to show that any weighted threshold gate can be simulated by a polynomial-size, constant-depth circuit consisting of unweighted threshold gates (such circuits are also known as \emph{majority circuits}). This result was improved by Goldmann, H\aa stad, and Razborov in \cite{DBLP:journals/cc/GoldmannHR92}, who showed (non-constructively) that weighted threshold gates can be computed by polynomial-size majority circuits of depth 2; in fact, \cite{DBLP:journals/cc/GoldmannHR92} showed that~any depth-$d$ weighted threshold circuit can be simulated efficiently by a depth-$(d+1)$ majority circuit. Soon thereafter Goldmann and Karpinski \cite{DBLP:conf/stoc/GoldmannK93} gave a constructive proof with better parameters for the size of the resulting majority circuits. Subsequent simplifications and improvements of these simulations were given by Hofmeister \cite{DBLP:conf/cocoon/Hofmeister96} and Amano and Maruoka \cite{DBLP:conf/mfcs/AmanoM05}.

\medskip

\noindent {\bf Monotone functions and monotone circuits.}  In a different, and highly successful, strand~of circuit complexity research, a wide range of lower bounds have been obtained against various types of \emph{monotone} Boolean circuits (composed of $\AND/\OR$ gates only but no negations).  A sequence of well-known results \cite{razborov1985Clique,andreev1985method,DBLP:journals/combinatorica/AlonB87,DBLP:journals/combinatorica/Tardos88} culminated in the existence of explicit monotone Boolean functions that can be computed by polynomial-size Boolean circuits~but require \emph{monotone} circuits of exponential size.  Analogous results highlighting the limitations of monotone circuits are also known at the ``low-complexity'' end of the spectrum: in an important result, 
Ajtai and Gurevich \cite{DBLP:journals/jacm/AjtaiG87} exhibited a monotone function in $\mathsf{AC}^0$ (i.e., a constant-depth, \mbox{polynomial-size} $\AND/\OR/\NOT$ Boolean circuit) that requires \emph{monotone} $\mathsf{AC}^0$ circuits 
(composed of $\AND/\OR$ gates) 
to have super-polynomial size.  However, it should be noted that the Ajtai--Gurevich circuit lower bound against monotone $\mathsf{AC}^0$ is quantitatively not very strong (at best a
  quasipolynomial~$n^{\Omega(\log n)}$ lower bound; see discussion following the statement of
  the Ajtai--Gurevich theorem below).   Other works have given alter\-native/simplified expositions of the Ajtai--Gurevich lower bound and of its consequences in formal logic (see \cite{BlaisSchederTan:13} for the former and Stolboushkin \cite{Stolboushkin:95} for the latter). But prior to the results of this paper,  stronger lower bounds against monotone $\mathsf{AC}^0$ circuits for monotone functions in $\mathsf{AC}^0$ remained elusive.

\medskip

\noindent {\bf This work:  Monotone weighted threshold functions versus constant-depth monotone majority circuits.}   As mentioned earlier, Goldmann and Karpinski  gave a constructive proof \cite{DBLP:conf/stoc/GoldmannK93} that weighted threshold gates can be simulated by polynomial-size and depth-2 majority circuits. They also observed that even if the weighted threshold gate is monotone, known simulations produce majority circuits that are inherently \emph{non-monotone} (i.e., they contain majority gates with negative weights, or equivalently, negation gates), which then led them to ask the question of whether an efficient monotone simulation is possible in constant depth.  

Hofmeister \cite{DBLP:conf/coco/Hofmeister92} made some early progress on this question by 
showing that any monotone depth-2 majority circuit that computes the function $U_{2,N}$
  from the abstract must have exponential size. To~state the result more precisely, let us   
  first clearly specify our notion of monotone majority circuits.
A monotone majority circuit here
  is a directed acyclic graph which may have multiple edges
  (called wires). 
There is~a single node with no outgoing wires, called the output gate.
Nodes that have no incoming wires~are called input nodes and are 
  each labeled either $0$, $1$ or $x_i$, for some $i$;
every other node is labeled with a 
  monotone unweighted threshold gate $\THR_{t,m}$ for some $t$, 
  with $m$ being its in-degree,
  which outputs $1$ iff there are at least $t$ $1$'s from its
  $m$ input wires.
We~say the \emph{size} of a monotone unweighted threshold gate $\THR_{t,m}$ is $m$
  (or its in-degree), and that the size of 
  a monotone majority circuit is the sum of the sizes of its gates
  (or its number of wires).\footnote{Observe that by reduplicating inputs, any weighted threshold function {$f$ given by} $\sum_{i=1}^n w_i x_i \geq t$ can be computed by an unweighted threshold gate of size $|w_1| + \cdots + |w_n|.$ {We sometimes refer to this as the ``weight of $f$.''}} Then\vspace{0.008cm} Hofmeister 
    showed that every depth-2 monotone majority circuit for $U_{2,N}$ must have size $\smash{2^{\Omega(\sqrt{N})}}$.
  
As mentioned above, in subsequent work  \cite{DBLP:conf/cocoon/Hofmeister96}
  and \cite{DBLP:conf/mfcs/AmanoM05}, several improvements were made on the Goldmann-Karpinski simulation, but neither
  is monotone, and no further progress was~obtained on the lower bound side after Hofmeister's paper \cite{DBLP:conf/coco/Hofmeister92} until the current work.   The question of Goldmann and Karpinski was recently restated by H\aa stad \cite{Hastad:10,DBLP:journals/dagstuhl-reports/BeyersdorffHKS14}.

\subsection{Our Results.}

Our main result shows that monotone weighted threshold gates cannot be simulated by subexponential size monotone majority circuits of constant depth.  This may be viewed as an extension~of Hofmeister's depth-2 lower bound in \cite{DBLP:conf/coco/Hofmeister92} to arbitrary constant depth (in fact we obtain super-polynomial size lower bounds even for circuits of small super-constant depth; see discussions after Theorem~\ref{thm:lower} below). We thus answer the question posed by Goldmann and Karpinski \cite{DBLP:conf/stoc/GoldmannK93} and~by H\aa stad \cite{Hastad:10,DBLP:journals/dagstuhl-reports/BeyersdorffHKS14}.

Before giving a precise statement of our results, we define formally the family $U_{k,N}$ of   Boolean functions as described in the abstract. Given $t\ge 1$, we let $[t]$ denote the set $\{1,\dots,t\}$.  {For $k \geq 2$, the function} $U_{k,N}$ maps $\{0,1\}^{k\times N}$ to $\{0,1\}$ as follows.   Given $x=(x_{i,j})_{i \in [k],j \in [N]} \in  \{0,1\}^{k\times N}$, define
$$
\SUM(x)\eqdef \sum_{j=1}^{N} 2^{N-j}\cdot (x_{1,j} + \cdots + x_{k,j}) \quad\quad  \text{and} \quad\quad 
U_{k,N}(x) \eqdef \begin{cases}
1 & \text{if~} \SUM(x)\ge 2^N,\\0 & \text{otherwise}.
\end{cases}
$$
It is helpful to think of the input $x=(x_{i,j})_{i \in [k],j \in [N]}$ as a $k$-row, $N$-column, and $0/1$-valued~matrix, where its $i$th row $(x_{i,1},\dots,x_{i,N})$ gives the binary representation of a number $\smash{x^{(i)}\in \{0,1,\dots,2^{N}-1\}}$ in the usual way (with $x_{i,1}$ being the most significant bit).  Then the function $U_{k,N}$ adds up the $k$ numbers 
  $x^{(1)},\ldots,x^{(k)}$ and outputs 1 if and only if the sum is at least $2^N.$
	
	With the definition of $U_{k,N}$ in place, our main result can be stated as follows:

\begin{theorem} \label{thm:lower}
Let $d \geq 2$, $n$ and $N$ be three positive integers that satisfy $$n\ge 2^{{60d}}
\quad\text{and}\quad
  N\ge ({2^{13}} n)^d.$$
Then any depth-$d$ monotone majority circuit that computes $U_{d,N}$ must 
  have size at least $2^{n/2^{60d}}.$
  \end{theorem}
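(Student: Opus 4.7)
The plan is to prove Theorem~\ref{thm:lower} by induction on the depth $d$, with the $d = 2$ case serving as the base and the inductive step driven by a carefully chosen random restriction. The parameter $n$ controls the final size lower bound, and the hypothesis $N \ge (2^{13} n)^d$ is consistent with an inductive scheme in which each level of the induction ``consumes'' a factor of $2^{13} n$ in the number of columns. For the base case $d = 2$, I would invoke (or reprove with the right constants) Hofmeister's argument~\cite{DBLP:conf/coco/Hofmeister92} that every depth-$2$ monotone majority circuit for $U_{2,N}$ has size $2^{\Omega(\sqrt{N})}$; this relies on a pair of distributions $\mathcal{Y}, \mathcal{N}$ over $2 \times N$ input matrices under which $U_{2,N}$ is almost always $1$ (resp.\ $0$) but no single monotone threshold gate of subexponential in-degree has appreciably different acceptance probability on the two.

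For the inductive step, I would start with a depth-$d$ monotone majority circuit $C$ of size $s$ computing $U_{d,N}$ and construct a random restriction $\rho$ with two properties. First, $\rho$ should fix enough of the $d \times N$ input bits so that $U_{d,N} \uhr \rho$, viewed as a function of the free bits, is equivalent to $U_{d - 1, N'}$ for some $N' \ge (2^{13} n)^{d-1}$. The natural way to do this is to free a $(d-1) \times N'$ sub-block of the matrix (removing one row and restricting to a carefully chosen contiguous window of $N'$ columns) and fix the remaining bits to a pattern whose weighted contribution equals $2^{N} - 2^{N - j_0 + 1}$ for the appropriate offset $j_0$, so that the residual threshold on the free columns becomes exactly $2^{N'}$ after rescaling. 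Second, under $\rho$ the bottom-layer gates of $C$ should simplify enough that the restricted circuit can be converted to a depth-$(d-1)$ monotone majority circuit of size close to $s$. Applying the inductive hypothesis to this reduced circuit-function pair then yields the claimed lower bound $2^{n / 2^{60d}}$, provided the per-layer losses fit inside the constants $2^{60d}$ and $2^{13}$.

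The principal obstacle is the depth-reduction step. Unlike $\AND$/$\OR$ gates in $\mathsf{AC}^0$ lower bounds, a monotone threshold gate $\THR_{t,m}$ with a balanced threshold $t \approx pm$ does not collapse to a constant under a $p$-random restriction; it can remain a nontrivial threshold on the surviving inputs. To handle this, I would use a restriction distribution whose per-bit marginals depend on the column the bit belongs to, and argue by a Chernoff-type concentration bound that, for any fixed bottom gate of in-degree $m$, the probability its restricted input sum lands within the narrow ``critical window'' around $t$ is exponentially small in $m$, unless $m \ge 2^{\Omega(n)}$, in which case the original circuit already satisfies the size lower bound. A union bound over all bottom gates, combined with a probabilistic selection of $\rho$, then produces a single restriction under which every surviving bottom gate is either forced past its threshold (and becomes constant) or retains only an $o(n)$-sized fan-in, in which case it can be absorbed into the gate above and the effective depth drops from $d$ to $d - 1$. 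The most delicate part of the argument is aligning these two requirements simultaneously: the column structure driving the circuit-side ``kill-or-shrink'' restriction must be compatible with the specific fixed-bit pattern used to realize the reduction $U_{d,N} \uhr \rho \equiv U_{d-1, N'}$, and this coordination is where the precise factor $2^{13} n$ per inductive layer is spent.
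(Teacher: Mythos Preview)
Your depth-reduction step has a genuine gap. The Chernoff claim is false for balanced thresholds: take a bottom gate $\THR_{t,m}$ with $t = m/2$ and apply any restriction that independently keeps each input free with probability $p$ and otherwise sets it to a fair coin. The fixed contribution $S$ has mean $(1-p)m/2$, so the residual threshold $t-S$ has mean $pm/2$, squarely inside the range $(0,pm]$ of non-triviality; the gate stays live with probability $\Omega(1)$ regardless of $m$, and there is no ``exponentially small in $m$'' window to exploit. So the dichotomy ``either $m\ge 2^{\Omega(n)}$ or the gate is killed with high probability'' fails. The fallback you propose---absorb a surviving gate of fan-in $o(n)$ into the layer above---also fails: a threshold of threshold functions is not a threshold function, so no depth reduction results from small fan-in alone. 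Finally, the restriction that forces $U_{d,N}\uhr\rho \equiv U_{d-1,N'}$ is extremely rigid (one row fixed, off-window columns fixed to specific carry-balancing values), leaving little genuine per-bit randomness to drive a concentration argument even if one existed.

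The paper's proof avoids bottom-up simplification entirely and works \emph{top-down}. It carries through the induction not a circuit but two pairs of hard distributions $(\yesd_\ell,\nod_\ell)$ and $(\yesd_\ell',\nod_\ell')$ over $(\ell+1)\times N_\ell$ inputs, each level built from $n$ independent ``sections'' drawn from the level-$(\ell-1)$ distributions. The inductive statement is that no size-$M$ depth-$\ell$ circuit $F$ achieves $\Pr_{\yesd_\ell}[F=1]+\Pr_{\nod_\ell}[F=0]>1+7^{\ell-1}\eps$ (and similarly for the primed pair). For the step, one looks at the depth-$(\ell-1)$ sub-circuits $H_1,\dots,H_m$ feeding the \emph{top} gate of $F$: whenever the current section-prefix $z$ is ``good'' (i.e.\ $F$ still distinguishes well after fixing $z$), the level-$(\ell-1)$ hypothesis applied to each $H_i\uhr\rho_z$ forces a $\Theta(\delta^2)$-fraction of the $H_i$'s to become constant under a random next section with probability $\Theta(\delta)$. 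Since the top fan-in is at most $M$, a martingale (Azuma) argument bounds the number of good sections by $O(\delta^2 n)$, and the remaining sections contribute at most $1+6\delta$ each to the advantage. The two pairs (primed and unprimed) are both needed because the construction of $\yesd_\ell^*,\nod_\ell^*$ uses $\yesd_{\ell-1},\nod_{\ell-1}'$ for the special section and $\yesd_{\ell-1}',\nod_{\ell-1}$ for the prefix sections; neither pair alone suffices for the induction to close.
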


This lower bound is nearly optimal for any fixed $d\ge 2$, as we prove the following upper bound.

\begin{theorem} \label{thm:upper}
Let {$k,d,N \geq 2$ be three positive integers.} Then there exists a depth-$d$ monotone\\ majority circuit of size $2^{6 (N^{1/d} \log k + \log N)}$ that computes $U_{k,N}$.
\end{theorem}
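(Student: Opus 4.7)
The plan is to prove Theorem~\ref{thm:upper} by induction on $d$, for a mild generalization that allows arbitrary thresholds. For each $k,N,d\ge 1$ and integer $t\in[0,k(2^N-1)]$, define the \emph{thresholded-addition} function
\[
\widetilde U_{k,N,t}(x)\;\stackrel{\mathrm{def}}{=}\;\mathbf{1}\bigl[\,\SUM(x)\ge t\,\bigr],
\]
and establish that each such $\widetilde U_{k,N,t}$ admits a depth-$d$ monotone majority circuit of size at most $2^{6(N^{1/d}\log k+\log N)}$. Theorem~\ref{thm:upper} is then the special case $t=2^N$.

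\emph{Base case} ($d=1$). $\widetilde U_{k,N,t}$ is itself a monotone weighted threshold whose weights sum to at most $k(2^N-1)$. Duplicating the wire from $x_{i,j}$ exactly $2^{N-j}$ times realizes it as a single monotone unweighted threshold gate of size $\le k\cdot 2^N$, which is within the target bound for $k,N\ge 2$.

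\emph{Inductive step} ($d\ge 2$). Partition the $N$ columns into $m:=\lceil N^{1/d}\rceil$ consecutive blocks of size $n':=\lceil N/m\rceil \le N^{(d-1)/d}$, and let $s_\ell$ be the weighted sum of the bits in block $\ell$, so that $\SUM(x)=\sum_{\ell=1}^m 2^{(m-\ell)n'}s_\ell$ with $s_\ell\in[0,\,k(2^{n'}-1)]$. The key combinatorial identity is the carry-propagation characterization: $\SUM\ge 2^N$ iff there exists a carry sequence $(g_2,\ldots,g_m)\in\{0,1,\ldots,k-1\}^{m-1}$ with
\[
s_1+g_2\ge 2^{n'},\qquad s_\ell+g_{\ell+1}\ge g_\ell\cdot 2^{n'}\ (2\le\ell\le m-1),\qquad s_m\ge g_m\cdot 2^{n'}.
\]
The forward direction is a telescoping argument; the reverse takes $g_\ell$ equal to the true carry into block $\ell-1$ in the addition of blocks $\ell,\ldots,m$. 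Each individual condition is a monotone weighted threshold on the bits of a \emph{single} block, hence an instance of $\widetilde U_{k,n',\cdot}$, which by induction has a depth-$(d-1)$ monotone majority circuit of size at most $2^{6((n')^{1/(d-1)}\log k+\log n')}=2^{6(N^{1/d}\log k+\log n')}$.

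\emph{Collapsing into a single top gate.} The remaining step is to combine the $O(mk^2)$ condition indicators into the output via a \emph{single} monotone unweighted threshold. For two blocks this amounts to the identity
\[
\mathbf{1}\bigl[H\cdot 2^{N_2}+L\ge t\bigr]\;=\;\mathbf{1}\Bigl[\,{\textstyle\sum_{j=-1}^{k-1}}A_j\,+\,{\textstyle\sum_{j=0}^{k-1}}B_j\,\ge\,k+2\Bigr],
\]
where $A_j=\mathbf{1}[H\ge q-j]$ and $B_j=\mathbf{1}[L\ge r+j\cdot 2^{N_2}]$ (with $t=q\cdot 2^{N_2}+r$ and the usual conventions for out-of-range thresholds); this is proved by a short case analysis exploiting that $A_j$ is monotone non-decreasing in $j$ while $B_j$ is monotone non-increasing in $j$. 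For general $m\ge 3$, we telescope this two-block collapse along the chain of blocks: treating the low portion $\sum_{\ell\ge 2}s_\ell\cdot 2^{(m-\ell)n'}$ as an instance of $\widetilde U_{k,(m-1)n',\cdot}$, recursing within the same top-level gate construction so that the per-block monotone/anti-monotone dichotomy survives. The resulting top gate is a single $\THR_{T,\,M}$ with $M=\mathrm{poly}(m,k)$ and a threshold $T$ determined by $m$ and $k$.

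\emph{Size bookkeeping.} With $O(mk^2)$ subcircuits each of size $\le 2^{6(N^{1/d}\log k+\log n')}$ and a top gate of size $\mathrm{poly}(m,k)$, the total size is bounded by
\[
O(mk^2)\cdot 2^{6(N^{1/d}\log k+\log n')}\,+\,\mathrm{poly}(m,k)\;\le\;2^{6(N^{1/d}\log k+\log N)},
\]
using $m=O(N^{1/d})$, $\log n'\le \log N$, and routine manipulation of constants; this closes the induction.

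\emph{Expected main obstacle.} The most delicate step is the collapse of the natural OR-of-ANDs over carry chains into a \emph{single} monotone unweighted threshold gate. The two-block case rests on the monotone/anti-monotone dichotomy of the $A_j$'s and $B_j$'s in the carry parameter, and getting the multi-block analogue requires carefully telescoping this collapse along the carry chain so that the top-level gate remains a single $\THR$ whose size is only polynomial in $m$ and $k$ (rather than exponential in $m$, which would ruin the recursion).
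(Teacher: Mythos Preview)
Your overall strategy---block decomposition, carry-bit indicators, and recursion on depth---matches the paper's. The genuine gap is the ``collapsing into a single top gate'' step for $m\ge 3$ blocks, which you correctly flag as the main obstacle but do not actually resolve.

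The proposed telescoping does not work as written. In your two-block identity, each $B_j$ is a single \emph{bit}. When you ``treat the low portion as an instance of $\widetilde U_{k,(m-1)n',\cdot}$'' and recurse ``within the same top-level gate construction,'' the recursive call produces a \emph{threshold expression} over the carry indicators of blocks $2,\ldots,m$, not a bit; one cannot substitute a threshold expression for a Boolean input inside another threshold and expect the composite to remain a single threshold. Some new idea is needed to realize the $m$-block carry propagation as one $\THR$ over all $O(mk^2)$ carry indicators simultaneously.

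Relatedly, your claim that the top gate has size $\mathrm{poly}(m,k)$ is incorrect. The paper's solution (its Lemma~\ref{l:inductive_construction}) expresses the coarse-block carry function $c_{u,v}$ directly as a single monotone threshold with weight $k^{m-\gamma}$ on the carry indicators of block $\gamma$:
\[
v\;+\;\sum_{\gamma=1}^{m}\Bigl(\sum_{\alpha=1}^{k-1}\sum_{\beta=0}^{k-1} c^{(\gamma)}_{\alpha,\beta}(x)\Bigr)\cdot k^{m-\gamma}\;\ge\;u\cdot k^{m},
\]
so the top gate has total weight $\Theta(k^{m})=2^{\Theta(N^{1/d}\log k)}$, not $\mathrm{poly}(m,k)$. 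This is still within the target size $2^{6(N^{1/d}\log k+\log N)}$, so the final bound survives, but your bookkeeping line is wrong and would need to be redone. The correctness of the displayed identity is itself the nontrivial content your sketch is missing: the paper proves it by showing, via induction on $\gamma$ from $m$ down to $1$, that the base-$k$ carry out of column $\gamma$ in the weighted sum on the left-hand side equals the true carry out of block $\gamma$ in the binary addition (using the monotonicity of the $c^{(\gamma)}_{\alpha,\beta}$ in $(\alpha,\beta)$, which is the paper's analogue of your ``monotone/anti-monotone dichotomy'').
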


\begin{remark}
For any fixed constant $d\ge 2$,
  Theorems  \ref{thm:lower} and \ref{thm:upper} together
  show that {the smallest} depth-$d$ monotone majority circuit that computes
  $U_{d,N}$ (note that this function has $d \cdot N=\Theta(N)$ input variables) has size $\exp(\Theta(N^{1/d}))$.
In addition, by setting $d=c\sqrt{\log N}$ and $n = 2^{61d}$ for some small enough positive constant $c$
  so that $N\ge (2^{13}n)^d$,
  Theorem \ref{thm:lower} implies that any depth-$d$ monotone majority circuit
  computing $U_{d,N}$ has superpolynomial size (exponential in
  $\smash{2^{c\sqrt{\log N})}}$).
\end{remark}

\begin{remark}
As an easy consequence of Theorem~\ref{thm:upper}, we obtain a slightly weaker version of the main result of Beimel and Weinreb \cite{DBLP:conf/coco/BeimelW05}.  They proved that the ``universal monotone~threshold function''\footnote{It is called the universal monotone threshold function because
  it can simulate any monotone weighted threshold function over $N$ inputs.}  $\smash{U_{O(N), O(N \log N)}}$ can be computed by a $\poly(N)$-size, depth-$O(\log N)$ monotone circuit 
  composed of fan-in two $\AND$ gates and unbounded fan-in $\OR$ gates. While Theorem~\ref{thm:upper} above is tailored for small values of $k$, we note that it implies that $\smash{U_{O(N), O(N \log N)}}$ can be computed by a
  $\poly(N)$-size, depth-$O(\log^2 N)$ monotone circuit composed of 
  fan-in two $\AND/\OR$ gates only. (In more detail, it is enough to set $d=\log N$ and replace each majority gate by a $O(\log N)$-depth fan-in-two AND/OR Boolean circuit.) We sketch a simpler construction in Appendix \ref{s:connectivity} that matches the parameters obtained in \cite{DBLP:conf/coco/BeimelW05} in the case of the universal monotone threshold function. 
  \end{remark}

Another  consequence of our lower bound as stated in Theorem \ref{thm:lower} is a significant strengthening of the Ajtai--Gurevich lower bound discussed earlier.  We recall their result in more detail:

\begin{theorem*}[Ajtai--Gurevich \cite{DBLP:journals/jacm/AjtaiG87}] \label{p:original_AG}
There exists an explicit  sequence $f = \{f_n\}_{n \in \mathbb{N}}$ of monotone Boolean functions $f_n \colon \{0,1\}^n \to \{0,1\}$ such that:
\begin{enumerate}
\item[\emph{(}i\emph{)}] $f \in \mathsf{AC}^0$;\vspace{-0.16cm}
\item[\emph{(}ii\emph{)}] $f \notin \mathsf{monAC}^0$: For any fixed constant $d$, any monotone depth-$d$ $\AND/\OR$ circuit computing $f_n$ must have size at least $S_d(n)$, for some function $S_d(n)= n^{\omega(1)}$.
\end{enumerate}
\end{theorem*}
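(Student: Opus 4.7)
The plan is to construct an explicit sequence of monotone Boolean functions $f_n \colon \zo^n \to \zo$ that lives in $\mathsf{AC}^0$ while requiring super-polynomial size in every constant-depth monotone $\AND/\OR$ circuit. The key design challenge is that well-known hard monotone functions (such as $k$-clique) are not themselves in $\mathsf{AC}^0$, so we must engineer a function whose $\mathsf{AC}^0$ representation crucially exploits negations on auxiliary variables, while remaining monotone in a ``primary'' set of variables that still carries enough combinatorial structure to defeat monotone $\mathsf{AC}^0$.

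A natural candidate is a colored-graph function. Let the input encode both a graph $G$ on $m$ vertices (with $\Theta(m^2)$ edge bits) and a labeling/coloring of the vertices, so that $n = \Theta(m^2)$. Define $f_n(G,c) = 1$ iff $G$ contains a ``witness structure'' (e.g., a clique, matching, or path system) whose coloring satisfies constraints expressible as the absence of certain forbidden patterns on $c$. For part (i), membership in $\mathsf{AC}^0$ follows by writing $f$ as a polynomial-size OR over candidate witnesses, each verified by a constant-depth subcircuit that performs a few positive edge conjunctions together with negations only on color bits; provided the color constraints merely \emph{filter} valid witnesses rather than reject any of them, the overall function remains monotone in the edge bits. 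For part (ii), the monotone lower bound, I would apply Razborov's approximator method \cite{razborov1985Clique} adapted to the bounded-depth setting: assign to each gate of a purported monotone $\mathsf{AC}^0$ circuit an approximator from a family of simple monotone functions (say monotone $k$-CNFs or $k$-DNFs), use a sunflower-type lemma to bound the error accrued at each $\AND$/$\OR$ gate, and show that the total error forces the circuit to misclassify some designed distribution over positive and negative instances unless its size is super-polynomial in $n$.

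The main obstacle is the interplay between depth and monotone approximation. Razborov's classical argument gives exponential lower bounds but allows unbounded depth; in the bounded-depth regime, a direct adaptation degrades the bound substantially, and one typically arrives at a quasi-polynomial bound of the form $n^{\Omega(\log n)}$. Making the argument go through quantitatively requires delicate coordination between the choice of approximator class, the sunflower parameters, and the distributions of YES- and NO-instances, so that no constant-depth monotone circuit can exploit its depth restriction to shortcut the approximation. Simultaneously, the auxiliary ``color'' structure must be designed so that the $\mathsf{AC}^0$ circuit with negations can verify witnesses cheaply \emph{without} spoiling monotonicity in the edge variables. Balancing these two constraints---a non-monotone efficient verifier on the auxiliary coordinates and genuine monotone-circuit hardness on the primary coordinates---is the technical heart of the Ajtai--Gurevich construction.
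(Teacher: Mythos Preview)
This theorem is not proved in the paper; it is quoted from \cite{DBLP:journals/jacm/AjtaiG87} as background, and the paper only remarks that ``the heart of their proof is a `switching lemma' for monotone functions on hypergrids.'' So the relevant comparison is between your sketch and (a) the actual Ajtai--Gurevich argument, and (b) the paper's own strengthening (Theorem~\ref{t:strong_AG}), which proceeds via the $U_{k,N}$ function and the distributional lower bound of Theorem~\ref{thm:lower} and explicitly ``does not use switching lemmas at all.''

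Your proposal differs from both and has a genuine gap at the construction step. You want $f(G,c)$ to be monotone as a Boolean function of \emph{all} its inputs, yet you define the color constraints as ``the absence of certain forbidden patterns on $c$.'' A forbidden-pattern predicate is anti-monotone in the $c$-bits: flipping a color bit from $0$ to $1$ can create a forbidden pattern and kill a previously valid witness, sending $f$ from $1$ to $0$. Saying that the color constraints ``merely filter valid witnesses'' and that $f$ ``remains monotone in the edge bits'' does not rescue this; monotonicity in a subset of the variables is not what the theorem requires. Without a concrete mechanism that makes $f$ monotone in $c$ while still forcing the $\mathsf{AC}^0$ circuit to use negations on $c$ in an essential way, the upper bound side of your plan does not go through. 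This is precisely the delicate design problem Ajtai and Gurevich solve with their hypergrid function, and it is not clear a colored-graph ansatz of the kind you describe can be made to work.

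On the lower bound side, adapting Razborov's approximator method to constant-depth monotone $\AND/\OR$ circuits is not standard and would itself be a result; the sunflower/approximator machinery is depth-insensitive and does not obviously give leverage from the depth bound. The actual Ajtai--Gurevich lower bound, and the exposition in \cite{BlaisSchederTan:13}, instead uses a monotone switching lemma tailored to their hypergrid function, which is where the depth restriction is exploited. If you want an approach that avoids switching lemmas entirely, the route taken in this paper is to pick a function ($U_{k,N}$) that is itself a single monotone weighted threshold gate, get the $\mathsf{AC}^0$ upper bound from carry-lookahead addition (Lemma~\ref{lem:add}), and get the monotone lower bound from the inductive distributional argument of Section~\ref{s:lower_bounds}.
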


\noindent Regarding part (\emph{ii}) above, it is not immediately clear what is the best (largest) function $S_d(n)$ that can be extracted from the Ajtai--Gurevich proof. However, $f_n$ is easily seen to be computed by a monotone depth-$2$ circuit (a monotone DNF) of size $n^{\log n}$, so $S_d(n) \leq n^{\log n}$ for all $d \geq 2.$

As an easy corollary of Theorem~\ref{thm:lower}, we  strengthen the Ajtai--Gurevich 
  circuit lower bound (for~a different monotone function  in $\mathsf{AC}^0$) in two ways:  by giving a lower bound against monotone \emph{majority~circuits} of constant depth (rather than monotone circuits of $\AND/\OR$ gates only), and by achieving an \emph{exponential} size lower bound for any fixed depth (rather than a bound which is at most $n^{\log n}$).  
  Our theorem is the following:
\begin{theorem} \label{t:strong_AG}
There exists an explicit sequence $g = \{g_n\}_{n \in \mathbb{N}}$ of monotone Boolean functions, where $g_n \colon \{0,1\}^{n \log n}$ $\to \{0,1\}$, such that:
\begin{enumerate}
\item[\emph{(}i\emph{)}] $g \in \mathsf{AC}^0$ \emph{(}in fact each $g_n$ is computed by a $\poly(n)$-size, depth-$3$ $\AND/\OR/\NOT$ circuit\emph{)};\vspace{-0.16cm}
\item[\emph{(}ii\emph{)}] For any constant $d \geq 2$, any monotone depth-$d$ majority circuit for $g_n$ must have size $2^{\Omega(n^{1/d})}$.
\end{enumerate}
\end{theorem}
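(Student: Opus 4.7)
The plan is to derive Theorem~\ref{t:strong_AG} from Theorem~\ref{thm:lower} via a disjunction-and-restriction construction: $g_n$ will be an OR of independent copies of $U_{d,N}$ indexed by $d$, living on disjoint input blocks, so that hardness at every constant depth $d$ transfers from the corresponding block by a one-step restriction argument.

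Concretely, I would pick a slowly-growing unbounded function $D=D(n)$ (for instance $D(n)=\lfloor\sqrt{\log n}\rfloor$) together with $N=N(n)=\Theta(n)$, chosen so that $\sum_{d=2}^{D(n)}dN\le n\log n$, partition the $n\log n$ input bits into disjoint blocks $B_2,\dots,B_{D(n)}$ with $|B_d|=dN$, and define
\[
g_n(x) \;\eqdef\; \bigvee_{d=2}^{D(n)} U_{d,N}\bigl(x|_{B_d}\bigr).
\]

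For part~(ii) I would argue by restriction. Fix any constant $d\ge 2$ and suppose $n$ is large enough that $D(n)\ge d$. Given a depth-$d$ monotone majority circuit of size $S$ that computes $g_n$, set every input bit outside the block $B_d$ to $0$. Each $U_{d',N}(x|_{B_{d'}})$ with $d'\ne d$ then vanishes, so the restricted function is exactly $U_{d,N}(x|_{B_d})$, computed by a depth-$d$ monotone majority circuit of size at most $S$. Theorem~\ref{thm:lower}, with the parameter choices $N=\Theta(n)$, yields $S\ge 2^{\Omega(N^{1/d})}=2^{\Omega(n^{1/d})}$.

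For part~(i) I would reduce to the following subclaim: for each $d\le D(n)$, the function $U_{d,N}$ admits a depth-$3$ $\AND/\OR/\NOT$ circuit of size polynomial in $N$ (with only mild dependence on $d$) whose top gate is an $\OR$. Once this is in place, the outer OR over $d$ in $g_n$ merges with the top OR of each sub-circuit into a single OR layer, producing a depth-$3$ AC$^0$ circuit for $g_n$ of total size $\poly(n)$. For $d=2$ the subclaim is witnessed by the classical carry-lookahead formula
\[
U_{2,N}(a,b)\;=\;\bigvee_{j=1}^{N}\Bigl(a_j\land b_j\land\bigwedge_{m<j}(a_m\lor b_m)\Bigr),
\]
a depth-$3$ OR-AND-OR monotone formula of size $O(N^2)$. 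For $d\ge 3$ I would generalize carry-lookahead: characterize the overflow condition $\sum_i a^{(i)}\ge 2^N$ by a depth-$3$ formula that enumerates a ``generate'' position at which the cumulative contribution first exceeds $2^N$ and verifies the right ``propagate/kill'' pattern on the positions above it, employing negations at the leaves to compactly encode the multi-valued carry states that arise in $d$-number addition.

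The hardest step will be the subclaim of part~(i) for $d\ge 3$: constructing a depth-$3$ (and specifically OR-AND-OR) $\AND/\OR/\NOT$ circuit for $U_{d,N}$ of size polynomial in both $N$ and $d$. The obvious reductions---a carry-save-adder tree followed by two-number carry-lookahead, or iterating two-number addition---produce circuits of depth strictly greater than three, and the upper bound of Theorem~\ref{thm:upper} instantiated at depth $3$ gives only size $2^{O(N^{1/3}\log d)}$, which is sub-exponential rather than polynomial. A direct depth-$3$ construction targeting only the final overflow bit (rather than computing all bits of the sum) and exploiting the special geometric-weight structure of $U_{d,N}$ will therefore be required.
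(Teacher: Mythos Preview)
Your lower bound argument for part~(ii) is correct, but the overall construction is more elaborate than necessary and the upper bound has a genuine gap.

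For part~(i) you correctly identify the bottleneck: you need a $\poly(N)$-size depth-$3$ circuit for $U_{d,N}$ with $d\ge 3$. You note that carry-save followed by two-number carry-lookahead seems to give depth $>3$, and stop there. The missing idea is that the carry-save tree need not be \emph{computed} by the circuit at all. After $O(\log d)$ rounds of the $3$-to-$2$ trick, the $d$ numbers are reduced to two numbers $y,z$ with $y+z=\sum_i a^{(i)}$, and each bit $y_j,z_j$ depends on at most $3^{O(\log d)}=\poly(d)$ of the original input bits. When $d\le(\log N)^c$ for a small enough constant $c$, this is at most $\log N$ input bits per output bit. Hence each generate bit $g_j=y_j\wedge z_j$ and propagate bit $p_j=y_j\vee z_j$ is a function of $\le 2\log N$ inputs and can be written directly as a $\poly(N)$-size CNF (or DNF). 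Substituting these CNFs for the $g_j$'s and $p_j$'s in the standard carry-lookahead formula
\[
U_{d,N}(x)\;=\;\Bigl(\bigvee_{j\ge N}(y_j\vee z_j)\Bigr)\;\vee\;\bigvee_{j<N}\Bigl(g_j\wedge\bigwedge_{j<i<N}p_i\Bigr)
\]
yields a $\poly(N)$-size OR--AND--OR circuit. This is exactly the content of the paper's Lemma~\ref{lem:add}.

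Once you have this, your OR-of-copies construction is superfluous. The paper simply takes $g_N\eqdef U_{k,N}$ with $k=(\log N)^c$. Part~(i) is Lemma~\ref{lem:add} directly. For part~(ii), any circuit for $U_{k,N}$ becomes a circuit for $U_{d,N}$ by zeroing out the last $k-d$ \emph{rows} of the input matrix (not separate blocks), and then Theorem~\ref{thm:lower} applies. Your disjoint-block OR achieves the same restriction effect but at the cost of a more complicated target function and no saving on the hard step.
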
 

It is interesting to observe that our proof of Theorem \ref{t:strong_AG} uses very different arguments from those of
Ajtai and Gurevich.  The heart of their proof is a ``switching lemma'' for monotone functions on hypergrids (see the excellent exposition of their proof given in \cite{BlaisSchederTan:13}), whereas our approach does not use switching
lemmas at all.

\subsection{Related Work and Our Techniques.}

In addition to papers discussed above, the works of Yao \cite{DBLP:conf/stoc/Yao89} and H\aa stad and Goldmann \cite{DBLP:journals/cc/HastadG91} are relevant in the context of our lower bound result.  Let $\Sipser_{d+1}$ denote the read-once monotone $n$-variable formula of depth $d+1$ that has alternating layers of $\AND$ and $\OR$ gates (see \cite{DBLP:journals/cc/HastadG91} for a detailed description of this function). Strengthening the earlier result of Yao \cite{DBLP:conf/stoc/Yao89}, H\aa stad and Goldmann \cite{DBLP:journals/cc/HastadG91} showed that a depth-$d$ circuit of \emph{weighted} monotone threshold gates computing $\Sipser_{d+1}$ must have size $\smash{2^{\Omega(n^{1/2d})}}.$  In contrast, our Theorem~\ref{thm:lower} only establishes a lower bound against constant-depth monotone circuits of \emph{unweighted} threshold gates, but --- crucially --- we establish the lower bound for a much ``simpler'' monotone function, $U_{d,N}$, that is computed by a \emph{single} weighted monotone threshold gate.   Indeed, the main challenge of our work is to push through a lower bound for such a heavily constrained target function.

At the heart of our lower bound proof is a sequence of carefully constructed pairs of probability distributions $(\yesd_\ell,\nod_\ell)$ over $\{0,1\}^{(\ell+ {1})\times N_\ell}$
  for $\ell= {1},\ldots,d$ (i.e. over possible inputs to $U_{\ell+ {1}, N_\ell}$ for some 
  $N_\ell$ to be specified later).
The first distribution $\yesd_\ell$ in the pair is supported on strings $x$ that have $U_{\ell+ {1},N_\ell}(x)=1$, while $\nod_\ell$ is supported on strings with $U_{\ell+ {1}, N_\ell}(x)=0.$  The key property~of these pairs of distributions, which yields our lower bound, is that considered together, each pair of $(\yesd_\ell,\nod_\ell)$  is ``hard'' for ``small'' monotone majority circuits of depth $\ell$ in a suitable sense.  In a bit more detail, our requirement is roughly that for any such circuit $F$, we have
\begin{equation} \label{eq:template}
\Pr_{\xx \sim \yesd_{\ell}}\big[F(\xx)=1\big] + 
\Pr_{\yy \sim \nod_{\ell}}\big[F(\yy)=0\big]  \leq 1 + \tau_\ell,
\end{equation}
for a suitable value $0 < \tau_\ell \ll 1$.  
At a high level, we establish (\ref{eq:template}) above through a careful inductive argument on $\ell$.  (We note that the preceding sketch is something of an oversimplification; actually, in order for the inductive hypothesis to be ``strong enough to prove itself,'' we require an analogue of (\ref{eq:template}) both for the pair $(\yesd_\ell,\nod_\ell)$ and for another pair of distributions $(\yesd_\ell',\nod_\ell')$, and the inductive argument establishing the case $\ell=j+1$ from the case $\ell=j$ requires careful analysis of yet a third carefully constructed pair $(\yesd_\ell^\ast,\nod_\ell^\ast)$ of distributions.  See Section \ref{s:lower_bounds} for full details of the argument.)\hspace{0.04cm}\footnote{Notice that the argument we just sketched implies that $U_{d+1,N}$ is hard against depth-$d$ circuits. A more careful analysis at the end of the argument using the distributions $(\yesd_d^\ast,\nod_d^\ast)$ allows us to obtain the same lower bound for $U_{d,N}$, as stated in Theorem \ref{thm:lower}.}


\medskip {\bf Notation and Organization.}
Recall that a \emph{restriction} $\rho$ of a function $f$ is an assignment fixing some of the input variables of $f$.  We write ``$f \uhr \rho$'' to denote $f$ restricted by $\rho$, a function over the rest of variables.  
We use boldface lower-case letters $\xx,\yy,$ etc. to denote string-valued random variables and boldface capital letters $\bX, \bY,$ etc. to denote real-valued random variables.

The rest of the paper is organized as follows.
We prove Theorems \ref{thm:lower} and \ref{thm:upper} in Sections \ref{s:lower_bounds}
  and \ref{s:upper_bounds}, respectively.
We then use Theorem \ref{thm:lower}  to prove Theorem \ref{t:strong_AG} in Section \ref{s:strong_AG}.


\section{The Lower Bound: Proof of Theorem \ref{thm:lower}.}\label{s:lower_bounds}

\def\xx{\bx}
\def\yy{\by}
\def\zz{{\bz}}

We prove Theorem \ref{thm:lower} in this section.
Throughout the section we use $d,n$ and $N$ to denote the three positive integers 
  in the statement of Theorem \ref{thm:lower} with $n\ge 2^{60d}$ and 
  $N \geq (2^{13} n)^d$.

This section is organized as follows.
In Sections \ref{sec:construct1} and \ref{sec:construct2}, we define inductively
  two pairs $(\yesd_\ell,\nod_\ell)$ 
  and $(\yesd_\ell',\nod_\ell')$ of distributions over strings
  $\xx\in \{0,1\}^{(\ell+{1})\times N_\ell}$ for $\ell$ from ${1}$ to $d$,
  where $N_\ell$ is specified later and satisfies $N_{1} < \cdots < N_d\le N$.~An important property of these distributions is that every $\xx$ drawn from 
  $\yesd_\ell$, $\nod_\ell$, $\yesd_\ell'$, $\nod_\ell'$ has $\SUM(\xx)$ equal to 
$$
2^{N_\ell},\quad 2^{N_\ell}-1,\quad 2^{N_\ell}-1\quad\text{and}\quad 2^{N_\ell}-(\ell+{1}),
$$ 
respectively. 
From the definition of $(\yesd_1,\nod_1)$ and $(\yesd'_1,\nod'_1)$, it is
  not too difficult to show that both~pairs are \emph{very hard} for monotone depth-$1$ majority circuits (Lemma \ref{lem:yes0no0}), i.e. 
  no majority gate with small weights can output 1 on strings drawn 
  from $\yesd_1$ with probability $p_1$ and at the same time 
  output $0$~on strings
  drawn from $\nod_1$ with probability $p_2$ if $p_1 + p_2$ is slightly larger than $1$ (and the same holds
  for $\yesd'_1$ and $\nod'_1$).
  
Then we prove our main technical lemma (Lemma \ref{l:inductive_step})
  in Section \ref{sec:induction}, which shows by 
  induction~that both pairs $(\yesd_\ell,\nod_\ell)$ and $(\yesd_\ell',\nod_\ell')$
  are hard in the same sense for ``small'' depth-{$\ell$} majority circuits 
  over $\{0,1\}^{(\ell+ {1})\times N_\ell}$ for every $\ell\in [d]$,
  with $(\yesd_1,\nod_1)$ and $(\yesd_1',\nod_1')$ serving as
  the base case.
Theorem \ref{thm:lower} for $U_{{d+1},N}$ (instead of $U_{{d},N}$ as stated)
  follows directely from $N_d\le N$ and the property that strings $\xx$ drawn from
  $\yesd_d$ and $\nod_d$ have $\SUM(\xx)$ equal to $2^{N_d}$ and $2^{N_d}-1$,
  respectively.
(Note that, although the second~pair   
  $(\yesd_d',\nod_d')$ is not needed in the proof of 
  Theorem \ref{thm:lower} once Lemma \ref{l:inductive_step} has been established, the intermediate 
  pairs $(\yesd_\ell',\nod_\ell')$ play a crucial role in the
  inductive definition of these distributions and the proof of Lemma \ref{l:inductive_step}.) 
  
  In order to extend the result to $U_{{d},N}$ (as stated
    in Theorem \ref{thm:lower}), we rely on  another auxiliary pair of distributions 
    $(\yesd_d^*,\nod_d^*)$ constructed during the proof, which is described in more detail in Section \ref{sec:construct2}.
We finally use Lemma \ref{l:inductive_step} to prove Theorem \ref{thm:lower} in Section \ref{sec:proof}.
  
\subsection{The Initial Two Pairs of Distributions.}\label{sec:construct1}

Let $d,n,N$ be positive integers in the statement of Theorem \ref{thm:lower}.
Let $\epsilon \eqdef 2^{-12d}$ and $N_1 \eqdef n \cdot (1/\epsilon)$. 
Given a string $z \in \{0,1\}^{2 \times N_1}$, 
  the \emph{$j$-th column of $z$} corresponds to a pair of positions $(1,j)$ and $(2,j)$, where $j \in [N_1]$. 

We now define two pairs of distributions $(\yesd_1,\nod_1)$ and
  $(\yesd_1',\nod_1')$ over $\{0,1\}^{2 \times N_1}$ and show that they are 
  hard for monotone depth-$1$ majority circuits of not-too-large size.
  We define the distributions via the following sampling processes.

\begin{flushleft}
\begin{itemize}
\item 
A string $\xx \sim \yesd_1$ is generated as follows. Let $\rr \sim [N_1]$ be uniformly random. We set both bits in the $\rr$-th column of $\xx$ to 1. For every $j > \rr$, we set both bits in the $j$-th column of $\xx$ to 0. For every $j < \rr$, we set the $j$-th column of $\xx$ to $(1,0)$ or $(0,1)$ independently and with equal probability.  For example, writing an $x \in \supp(\yesd_1)$ as a matrix, it would look like
\begin{center}
$x = $
\begin{tabular}{cccccc|c|ccccc}
1 & 0 & 0 & 1 & $\cdots$ & 0 \hspace{0.012cm}&\hspace{0.012cm} 1 \hspace{0.012cm}& \hspace{0.01cm} 0 & 0 & $\cdots$ & 0 & 0\\
0 & 1 & 1 & 0 & $\cdots$ & 1 \hspace{0.012cm}&\hspace{0.012cm} 1 \hspace{0.012cm}& \hspace{0.01cm} 0 & 0 & $\cdots$ & 0 & 0\\
\end{tabular}
\quad and we have \quad $\SUM(x)=2^{N_1}$.
\end{center}

\item A string $\yy \sim \nod_1$ is generated by setting its $j$-th column to $(1,0)$ or $(0,1)$ independently and with equal probability for each $j \in [\rr]$.  So a string $y\in \supp(\nod_1)$ would look like
\begin{center}
$y = $
\begin{tabular}{cccccccccccc}
0 & 1 & 0 & 0 & $\cdots$ & 1 & 1 & 0 & 1 & $\cdots$ & 1 & 0\\
1 & 0 & 1 & 1 & $\cdots$ & 0 & 0 & 1 & 0 & $\cdots$ & 0 & 1\\
\end{tabular}\quad and we have \quad $\SUM(y)=2^{N_1} - 1$.
\end{center}

\item $\yesd_1'$ is the same as $\nod_1$. In particular, each $x \in \supp(\yesd_1')$ has $\SUM(x)=2^{N_1} - 1$.

\item Finally, a string $\yy \sim \nod_1'$ is obtained as follows. First, sample a random $\xx \sim \yesd_1$. Then let $\yy$ be the string obtained by negating each bit of $\xx$. So a string $y \in \supp(\nod_1')$ looks like
\begin{center}
$y = $
\begin{tabular}{cccccc|c|ccccc}
1 & 1 & 0 & 1 & $\cdots$ & 0 \hspace{0.012cm}&\hspace{0.012cm} 0 \hspace{0.012cm}&\hspace{0.012cm} 1 & 1 & $\cdots$ & 1 & 1\\
0 & 0 & 1 & 0 & $\cdots$ & 1 \hspace{0.012cm}&\hspace{0.012cm} 0 \hspace{0.012cm}&\hspace{0.012cm} 1 & 1 & $\cdots$ & 1 & 1\\
\end{tabular}
\quad and\quad $\SUM(y)= 2^{N_1}-2$.
\end{center}
\end{itemize}\end{flushleft}

Recall a monotone depth-$1$ majority circuit of size $s$ is simply a monotone weighted majority gate with total weight at most $s$.
We show below that both pairs $(\yesd_1,\nod_1)$ and
  $(\yesd_1',\nod_1')$ defined above are hard for a monotone depth-1 circuit
  (to be correct on both $\yesd_1$ and $\nod_1$, 
  or on both $\yesd_1'$ and $\nod_1'$, with nontrivial probability) {unless the total weight $s$ is large.}

\begin{lemma} \label{lem:yes0no0}
For any depth-$1$ {monotone} majority circuit $F$ over $\{0,1\}^{2 \times N_1}$ of size at most $2^{n -1}$, 
\begin{align}&\Pr_{\xx \sim \yesd_1}\big[F(\xx)=1\big] + 
\Pr_{\yy \sim \nod_1}\big[F(\yy)=0\big]  \leq 1 + \epsilon, 
  \label{eq:oneplustauinit1}\quad\text{and}\\[0.7ex]
&\Pr_{\xx \sim \yesd'_1}\big[F(\xx)=1\big] + 
\Pr_{\yy \sim \nod_1'}\big[F(\yy)=0\big]  \leq 1 + \epsilon\label{eq:oneplustauinit2}.
\end{align}
\end{lemma}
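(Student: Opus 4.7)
The plan is to prove the first inequality; the second follows by a symmetric argument (see the final paragraph).  A depth-$1$ monotone majority circuit of size at most $W := 2^{n-1}$ is a weighted threshold $F$ with $F(x){=}1 \iff S(x) \geq t$, where $S(x) := \sum_{i,j} w_{i,j} x_{i,j}$ and $w_{i,j}$ are nonnegative integers satisfying $\sum_{i,j} w_{i,j} \leq W$.  Setting $u_j := w_{1,j}+w_{2,j}$, the stated inequality is equivalent to $\Delta := \Pr_{\yesd_1}[F(\xx){=}1] - \Pr_{\nod_1}[F(\yy){=}1] \leq \eps$, so I would bound $\Delta$ via a coupling followed by a combinatorial estimate.

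Couple $\yy \sim \nod_1$ and $\xx \sim \yesd_1$ via shared randomness $(\rr,\boldsymbol{b}_1,\dots,\boldsymbol{b}_{N_1})$, where $\rr$ is uniform on $[N_1]$ and the bits are iid uniform: the $j$-th column of $\yy$ is $(\boldsymbol{b}_j, 1-\boldsymbol{b}_j)$, while the $j$-th column of $\xx$ agrees with that of $\yy$ for $j < \rr$, equals $(1,1)$ for $j = \rr$, and equals $(0,0)$ for $j > \rr$.  With $a_j := \boldsymbol{b}_j w_{1,j} + (1-\boldsymbol{b}_j) w_{2,j}$, $a_j' := u_j - a_j$, and $Q_r := \sum_{j > r} a_j$ (a random variable indexed by $r \in [N_1]$), direct computation yields
$$
S(\xx) - S(\yy) \;=\; a_\rr' - Q_\rr.
$$
In particular, $F(\xx){=}1 \wedge F(\yy){=}0$ forces $S(\xx) > S(\yy)$, hence $Q_\rr < a_\rr' \leq u_\rr$.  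Defining $A(\boldsymbol{b}) := \{ r \in [N_1] : Q_r(\boldsymbol{b}) < u_r \}$, this gives the reduction
$$
\Delta \;\leq\; \Pr_{\rr, \boldsymbol{b}}\bigl[Q_\rr < u_\rr\bigr] \;=\; \frac{1}{N_1}\,\E_{\boldsymbol{b}}\bigl[|A(\boldsymbol{b})|\bigr].
$$
Since $\eps = n/N_1$, the task reduces to the combinatorial bound $\E_{\boldsymbol{b}}[|A(\boldsymbol{b})|] \leq n$.

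The core step is this bound.  I would partition columns via the tail weight $M_r := \sum_{j > r} u_j$: call $r$ heavy if $u_r \geq M_r/4$, and light otherwise.  A doubling argument ($M_{r-1} \geq \frac{5}{4} M_r$ at each heavy column, combined with $M_0 \leq W$) bounds the number of heavy columns by $O(\log W) = O(n)$; since each contributes at most $\Pr[r \in A] \leq 1$ to $\E_{\boldsymbol{b}}[|A(\boldsymbol{b})|]$, the heavy share of this expectation is $O(n)$.  For a light column, $\E[Q_r] = M_r/2 \geq 2 u_r$, and Hoeffding's inequality gives $\Pr[Q_r < u_r] \leq \exp\bigl(-\Omega(M_r/V_r)\bigr)$ with $V_r := \max_{j > r} u_j$.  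To apply this uniformly I would iteratively strip off the few columns of dominant residual weight after $r$ (each itself heavy or moderate, hence absorbed into the $O(n)$ count above) and apply Hoeffding to the resulting well-spread remainder.  Choosing constants carefully then gives $\E_{\boldsymbol{b}}[|A(\boldsymbol{b})|] \leq n$, provided $N_1 = n/\eps$ is large enough.

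For the pair $(\yesd_1',\nod_1')$, the same coupling (now with $\xx$ drawn from $\yesd_1'=\nod_1$ and $\yy$ the bitwise negation of a $\yesd_1$-type sample) yields $S(\xx) - S(\yy) = a_\rr - \sum_{j > \rr} a_j'$, which is the earlier identity after the substitution $a_j \leftrightarrow a_j'$; since the bit flip $\boldsymbol{b}_j \mapsto 1-\boldsymbol{b}_j$ preserves the uniform law on $\boldsymbol{b}$, the entire analysis carries through verbatim.  The main obstacle is the combinatorial claim $\E_{\boldsymbol{b}}[|A(\boldsymbol{b})|] \leq n$: while the heavy-column doubling argument is clean, the Hoeffding bound for light columns is effective only when the ratio $V_r/M_r$ is small, and handling weight distributions that concentrate on a few columns requires the iterative peeling above together with a careful accounting of constants to absorb the logarithmic losses into the generous slack $N_1 = n/\eps$.
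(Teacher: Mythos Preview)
Your coupling and the reduction to bounding $\Pr[a_{\rr}' > Q_{\rr}]$ are correct and match the paper's setup.  The gap is in the next step: you discard information by weakening $a_{\rr}' > Q_{\rr}$ to $u_{\rr} > Q_{\rr}$, and the resulting claim $\E_{\boldsymbol{b}}[|A(\boldsymbol{b})|] \le n$ is \emph{false} in general.  For a concrete counterexample, take $w_{1,j} = w_{2,j} = v_j$ (so $u_j = 2v_j$ and $a_j = v_j$ deterministically); then $A = \{r : \sum_{j>r} v_j < 2v_r\}$, and one can recursively choose integers $v_K = 1$, $v_{K-1} = 1$, $v_{K-2} = 2$, $v_{K-3} = 3,\ldots$ with $v_r = \lfloor \tfrac{1}{2}\sum_{j>r} v_j\rfloor + 1$ so that \emph{every} $r \le K$ lies in $A$.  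The partial sums grow like $(3/2)^{K-r}$, so the total weight $\sum u_j = 2\sum v_j \le 2^{n-1}$ allows $K \approx (n-1)/\log_2(3/2) \approx 1.7\,n$, giving $|A| \approx 1.7\,n > n$.  Thus your heavy/light Hoeffding scheme can at best yield $\E[|A|] = O(n)$ (and even that is not established, since the ``iterative peeling'' is only sketched), which proves only $\Delta \le C\eps$ for some $C>1$ rather than the required $\Delta \le \eps$.

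The paper avoids this loss by \emph{not} weakening $a_{\rr}'$ to $u_{\rr}$.  The key observation you are missing is a bit-flip symmetry: since $a_r' = a_r(\boldsymbol{b}')$ where $\boldsymbol{b}'$ flips $b_r$, and $Q_r$ does not depend on $b_r$, one has $\Pr[a_{\rr}' > Q_{\rr}] = \Pr[a_{\rr} > Q_{\rr}]$.  Now define $B(\boldsymbol{b}) = \{r : a_r > \sum_{j>r} a_j\}$.  For \emph{every} fixed $\boldsymbol{b}$, a clean doubling argument gives $|B| \le n$ deterministically: if $r_1 < \cdots < r_m$ lie in $B$ then $\sum_{j \ge r_k} a_j > 2\sum_{j>r_k} a_j \ge 2\sum_{j \ge r_{k+1}} a_j$, so $\sum_j a_j \ge 2^{m-1}$, while $\sum_j a_j \le \sum_j u_j \le 2^{n-1}$.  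Hence $\Delta \le \E[|B|]/N_1 \le n/N_1 = \eps$, with no concentration inequalities or constant-juggling needed.  Your treatment of the second pair is fine once this fix is in place: the substitution $a_j \leftrightarrow a_j'$ you note is exactly the same bit-flip symmetry, applied globally.
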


\begin{proof}
We present the proof of the first inequality on $(\yesd_1,\nod_1)$. An entirely similar argument establishes the bound for $(\yesd'_1, \nod'_1)$. 

Consider an auxiliary distribution $\mathcal{D}$ (essentially a coupling of $\yesd_1$ and $\nod_1$) supported over $\{0,1\}^{ {2 \times}N_1} \times \{0,1\}^{ {2 \times} N_1} \times [N_1]$, and defined in the following way. A draw $(\xx,\yy, \rr) \sim \mathcal{D}$ is obtained by selecting a uniformly random $\rr \sim [N_1]$, a string $\yy \sim \nod_1$, and by letting $\xx = \xx(\yy, \rr) \in \{0,1\}^{ {2 \times}N_1}$ be the string obtained by replacing the $\rr$-th column of $\yy$ with $(1,1)$, and by setting the $j$-th column of $\yy$ to $(0,0)$ for every $j > \rr$. Observe that the marginal distributions $\mathcal{D}_{\xx}$ and $\mathcal{D}_{\yy}$ are identical to $\yesd_1$ and $\nod_1$, respectively. Consequently,
\begin{align*}
\text{LHS of Equation}~(\ref{eq:oneplustauinit1}) & = \Pr_{(\xx,\yy,\rr) \sim \mathcal{D}} \big [ F(\xx) = 1  \big ] + \Pr_{(\xx,\yy,\rr) \sim \mathcal{D}} \big [ F(\yy) = 0  \big ] \nonumber \\[0.4ex] 
& = \Pr \big [F(\xx) = 1~\text{or}~F(\yy) = 0 \big] +  \Pr \big [F(\xx) = 1~\text{and}~F(\yy) = 0\big] \nonumber \\[0.4ex]
& \leq 1 + \Pr \big [F(\xx) = 1~\text{and}~F(\yy) = 0\big]. \nonumber
\end{align*}
Hence to prove the lemma, it is enough to show that
\begin{equation}\label{eq:eps_bound}
q \eqdef \Pr_{(\xx,\yy,\rr) \sim \mathcal{D}} \big [F(\xx) = 1~\text{and}~F(\yy) = 0\big]  \leq \epsilon.
\end{equation}

For every $r \in [N_1]$, let $\bY_r$ be an indicator random variable defined on $\mathcal{D}$ that is $1$ whenever
$$
w_r(\yy) \;>\; \sum_{\ell \,>\, r} w_\ell(\yy),
$$
where $w_j(y) \eqdef w_{1,r} \cdot y_{1,r} + w_{2,r} \cdot y_{2,r}$, and $w_{i,j}$ is the weight corresponding to the input variable of $F$ at position $(i,j)$. Informally, $\bY_r = 1$ if and only if the weight of $\yy$ with respect to $F$ at the $r$-th column is strictly larger than the sum of the weights collected from all succeeding columns.

We will employ the following claim to establish Equation (\ref{eq:eps_bound}).
\begin{claim}\label{claim:Y_r_inequality}
For every $j \in [N_1]$, we have 
$$
q_j \eqdef \Pr_{(\xx,\yy,\rr) \sim \mathcal{D}} \big [F(\xx) = 1~\text{\emph{and}}~F(\yy) = 0\hspace{0.05cm}\big|\hspace{0.05cm}\rr = j\big]  \leq  \Pr_{\mathcal{D}} \big [ \bY_j = 1 \big ].
$$
\end{claim}
\begin{proof}
We consider first the case where $j = 1$. 
The conditions of $F(\xx) = 1$ and $\rr = 1$ imply that
$w_{1,1} + w_{2,1}\ge t,$
where $t$ is the threshold of $F$. Furthermore, because $F(\yy) = 0$ it must be the case that
$\sum_{r = 1}^{N_1} w_r(\yy) < t.$
These inequalities give us
\begin{equation}\label{eq:x_and_y_combined}
w_{1,1} + w_{2,1} - w_1(\yy)  >  \sum_{r \,>\, 1} w_r(\yy).
\end{equation}
Let $\widetilde{\yy}$ be the string obtained from $\yy$ by flipping the two bits in the first column of $\yy$. Equation (\ref{eq:x_and_y_combined}) is then equivalent to $w_1(\widetilde{\yy})>\sum_{r \,>\, 1} w_r(\widetilde{\yy}).$ Therefore,
\begin{align*}
q_1& \leq \Pr_{(\xx,\yy,\rr) \sim \mathcal{D}} \Big [w_1(\widetilde{\yy}) > \sum_{r > 1} w_r(\widetilde{\yy})
\hspace{0.05cm}\big|\hspace{0.05cm}\rr = 1\Big] \nonumber\\
& = \Pr_{(\xx,\yy,\rr) \sim \mathcal{D}} \Big [w_1(\yy) > \sum_{r > 1} w_r(\yy)
\hspace{0.05cm}\big|\hspace{0.05cm}\rr = 1\Big] \nonumber \\
& = \Pr_{(\xx,\yy,\rr) \sim \mathcal{D}} \big [ \bY_1 = 1 \big], \nonumber
\end{align*}
where the last two equations use the independence of $\yy$ and $\rr$ as well as the fact that $\widetilde{\yy}$ and $\yy$ are identically distributed.

For $j > 1$ the result can be proved similarly by writing $q_j$ as a conditional expectation over the outcome of the first $j - 1$ columns of $\yy$, then adapting the argument above in the natural way. 
\end{proof}

Claim \ref{claim:Y_r_inequality} and the definitions of probabilities $q$ and $q_j$ imply that
$$
N_1 \cdot q \;=\; \sum_{j = 1}^{N_1} q_j \;\leq\; \sum_{j=1}^{N_1} \Pr_{\mathcal{D}} \big [ \bY_j = 1 \big ]\; = \;\E_{\mathcal{D}} \left[ \sum_{j = 1}^{N_1} \bY_j \right].
$$
In particular, there is a string $y^\ast\in \supp(\nod_1)$ and a set $S \subseteq [N_1]$ with $|S| \geq N_1 \cdot q$ such that
\begin{equation}\label{hehe2}
w_r(y^\ast)  > \sum_{\ell \,>\, r} w_\ell(y^\ast),
\end{equation} 
for each $r \in S$.
Recall that    the weight associated to each variable in $F$ is a non-negative integer,
   and that the total weight of $F$ is at least
   $\sum_{r\ge 1} w_r(y^\ast)$. It follows directly from 
   (\ref{hehe2}) that $F$ must have total weight at least $\smash{2^{|S| - 1}}$. However, by assumption $F$ has total weight at most $2^{n - 1}$. Altogether, we get from these inequalities and $N_1 = n \cdot (1/\epsilon)$  that $q \leq \epsilon$, which completes the proof.
\end{proof}  

\subsection{A Sequence of Pairs of Pairs of Distributions.} \label{sec:construct2} 
  
Next, suppose that we have defined pairs of distributions $(\yesd_{\ell-1},\nod_{\ell-1})$ and
  $(\yesd_{\ell-1}',\nod_{\ell-1}')$ over $\{0,1\}^{\ell \times N_{\ell - 1}}$ 
  for some ${2 \leq \ell \leq d}$, 
where a string $\xx$ drawn from $\yesd_{\ell-1}$, $\nod_{\ell-1}$, $\yesd_{\ell-1}'$
  and $\nod_{\ell-1}'$
  has $\SUM(\xx)$ equal to 
\begin{equation}\label{eq:sum0}
2^{N_{\ell-1}},\quad 2^{N_{\ell-1}}-1,\quad 2^{N_{\ell-1}}-1\quad \text{and}\quad
2^{N_{\ell-1}}- {((\ell - 1) + 1)},
\end{equation}
respectively.  (Note that the pairs $(\yesd_{1},\nod_{1})$ and
  $(\yesd_{1}',\nod_{1}')$ have this property.)
Our aim is to inductively define $(\yesd_\ell,\nod_\ell)$ and 
  $(\yesd_\ell',\nod_\ell')$ over $\{0,1\}^{(\ell+ {1}) \times N_\ell}$, where 
  $$
N_\ell \eqdef n\cdot N_{\ell-1}+1
  \leq   {2^\ell \cdot n^{\ell - 1} \cdot N_1} =(2n)^\ell \cdot  {2^{12}}^d \leq  {(2^{13}n)^d} \leq N, \quad \text{for~}\ell \in \{2, \ldots, d\},
$$
and a string $\xx$ drawn from $\yesd_\ell$,
  $\nod_\ell$, $\yesd_\ell'$ and $\nod_\ell'$ has $\SUM(\xx)$ 
  equal to 
\begin{equation}\label{eq:sum2}
2^{N_\ell},\quad 2^{N_\ell}-1,\quad 2^{N_\ell}-1\quad \text{and}\quad 2^{N_\ell}-(\ell+ {1}),
\end{equation}
respectively.
To this end we start by defining a pair of distributions $(\yesd_\ell^*,\nod_\ell^*)$ over
  $\{0,1\}^{ {\ell} \times N_\ell^\ast}$ (note that the number of rows for these  distributions, ${\ell}$, is exactly the same as 
  for the distributions $\yesd_{\ell-1},\nod_{\ell-1}$ and
  $\yesd_{\ell-1}',\nod_{\ell-1}'$), with $$N_\ell^*\eqdef n\cdot N_{\ell-1}=N_\ell-1.$$  
  
\begin{figure}[t!]
\captionsetup{justification=centering}
\begin{center}
\begin{picture}(400,190)(20, 40)

\put(-1,159){$\underbrace{\mbox{~~~~~~~~~~~~~~~~~~~~~~~~~~~~~~~~~~~~~~~~~~~~~~~~~~~~~~~~~~~~~~~~~~~~~~~~~~~~~~~~~~~~~~~~~~~~~~~~~~~~~~~~~~~~~~~~~~~~~~~~~~~~}}$}
\put(223,143){{\small $\xx \sim \yesd^\ast_\ell$}}
\linethickness{1pt}
\put(0,200){\line(1,0){450}}
\put(0,160){\line(1,0){450}}
\put(0,200){\line(0,-1){40}}
\put(450,200){\line(0,-1){40}}

\put(5,215){{\small section}}
\put(5,205){{\small 1}}
\put(10,187){{\small $\yesd'_{\ell-1}$}}
\put(17,178){{\small or}}
\put(10,167){{\small $\nod_{\ell-1}$}}

\linethickness{0.5pt}
\put(50,200){\line(0,-1){40}}
\put(150,200){\line(0,-1){40}}
\put(200,200){\line(0,-1){40}}
\put(250,200){\line(0,-1){40}}
\put(300,200){\line(0,-1){40}}
\put(400,200){\line(0,-1){40}}

\put(56,178){{\small $\cdots \cdots \cdots \cdots \cdots \cdots \cdot \cdot$}}

\put(155,215){{\small section}}
\put(155,205){{\small $\bt-1$}}
\put(160,187){{\small $\yesd'_{\ell-1}$}}
\put(167,178){{\small or}}
\put(160,167){{\small $\nod_{\ell-1}$}}

\put(205,215){{\small section}}
\put(205,205){{\small $\bt$}}
\put(210,178){{\small $\yesd_{\ell-1}$}}

\put(255,215){{\small section}}
\put(255,205){{\small $\bt+1$}}

\put(252,190){{\small $0 \hskip-1pt\cdots \cdots \cdot \cdot \hskip1pt 0$}}
\put(253.2,177){{\small $\vdots$}} 
\put(253.2,171){{\small $\cdot$}}
\put(252,163){{\small $0 \hskip-1pt\cdots \cdots \cdot \cdot \hskip1pt 0$}}
\put(295,171){{\small $\cdot$}}
\put(295,177){{\small $\vdots$}} 

\put(306,178){{\small $\cdots \cdots \cdots \cdots \cdots \cdots \cdot \cdot$}}

\put(405,215){{\small section}}
\put(405,205){{\small $n$}}
\put(402,190){{\small $0 \hskip-1pt\cdots \cdots \cdot \cdot \hskip1pt 0$}}
\put(403.2,177){{\small $\vdots$}} 
\put(403.2,171){{\small $\cdot$}}
\put(402,163){{\small $0 \hskip-1pt\cdots \cdots \cdot \cdot \hskip1pt 0$}}
\put(445,171){{\small $\cdot$}}
\put(445,177){{\small $\vdots$}}

\put(-1,59){$\underbrace{\mbox{~~~~~~~~~~~~~~~~~~~~~~~~~~~~~~~~~~~~~~~~~~~~~~~~~~~~~~~~~~~~~~~~~~~~~~~~~~~~~~~~~~~~~~~~~~~~~~~~~~~~~~~~~~~~~~~~~~~~~~~~~~~~}}$}
\put(223,43){{\small $\xx \sim \nod^\ast_\ell$}}
\linethickness{1pt}
\put(0,100){\line(1,0){450}}
\put(0,60){\line(1,0){450}}
\put(0,100){\line(0,-1){40}}
\put(450,100){\line(0,-1){40}}

\put(5,115){{\small section}}
\put(5,105){{\small 1}}
\put(10,87){{\small $\yesd'_{\ell-1}$}}
\put(17,78){{\small or}}
\put(10,67){{\small $\nod_{\ell-1}$}}

\linethickness{0.5pt}
\put(50,100){\line(0,-1){40}}
\put(150,100){\line(0,-1){40}}
\put(200,100){\line(0,-1){40}}
\put(250,100){\line(0,-1){40}}
\put(300,100){\line(0,-1){40}}
\put(400,100){\line(0,-1){40}}

\put(56,78){{\small $\cdots \cdots \cdots \cdots \cdots \cdots \cdot \cdot$}}

\put(155,115){{\small section}}
\put(155,105){{\small $\bt-1$}}
\put(160,87){{\small $\yesd'_{\ell-1}$}}
\put(167,78){{\small or}}
\put(160,67){{\small $\nod_{\ell-1}$}}

\put(205,115){{\small section}}
\put(205,105){{\small $\bt$}}
\put(210,78){{\small $\nod'_{\ell-1}$}}

\put(255,115){{\small section}}
\put(255,105){{\small $\bt+1$}}

\put(252,90){{\small $1 \cdots \cdots \cdot \cdot 1$}}
\put(253.2,77){{\small $\vdots$}} 
\put(253.2,71){{\small $\cdot$}}
\put(252,63){{\small $1 \cdots \cdots \cdot \cdot 1$}}
\put(295,71){{\small $\cdot$}}
\put(295,77){{\small $\vdots$}} 

\put(306,78){{\small $\cdots \cdots \cdots \cdots \cdots \cdots \cdot \cdot$}}

\put(405,115){{\small section}}
\put(405,105){{\small $n$}}
\put(402,90){{\small $1 \cdots \cdots \cdot \cdot 1$}}
\put(403.2,77){{\small $\vdots$}} 
\put(403.2,71){{\small $\cdot$}}
\put(402,63){{\small $1 \cdots \cdots \cdot \cdot 1$}}
\put(445,71){{\small $\cdot$}}
\put(445,77){{\small $\vdots$}} 
\end{picture}\vspace{0.25cm}
\caption{Illustrations of how the $\yesd_\ell^\ast$ and $\nod_\ell^\ast$ distributions\vspace{0.03cm}\\ 
are defined from the {$\yesd_{\ell-1}',\nod_{\ell - 1}',\yesd_{\ell - 1}$ and $\nod_{\ell - 1}$} distributions.\vspace{-0.4cm}}
\label{illustration0}
\end{center}
\end{figure}
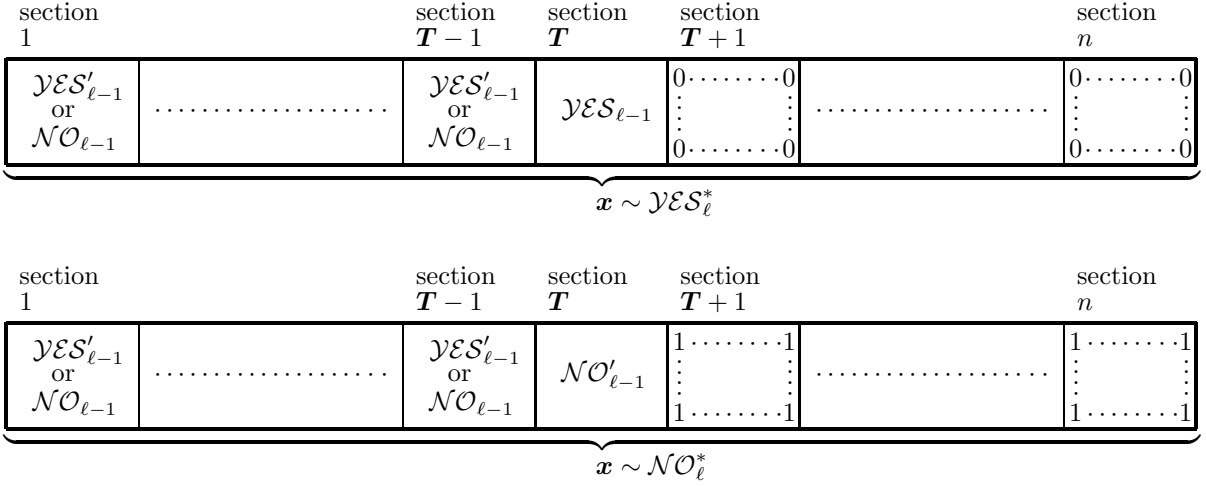

To define $(\yesd_\ell^*,\nod_\ell^*)$, we partition the $N_\ell^*$ columns into $n$ \emph{sections}, each with $N_{\ell-1}$ columns (and {$\ell$} rows). 
(So the first section consists of all $x_{i,j}$ with $j\in [N_{\ell-1}]$, the second 
  section consists of all $x_{i,j}$ with $j\in [N_{\ell-1}+1,2N_{\ell-1}]$, and so forth.)
A draw of a string from  $\yesd_\ell^*$ is obtained as follows:  first we draw an integer $\bt$ uniformly from $[n]$, and then
\begin{flushleft}\begin{enumerate}
\item [$(a)$] For each $i<\bt$, we independently set the $i$-th section to be a string drawn
  from $\nod_{\ell-1}$ with probability $1/2$ or a string drawn from $\yesd_{\ell-1}'$ with probability $1/2$.\vspace{-0.08cm}
\item [$(b)$] For each $i>\bt$, we set the $i$-th section to be all $0$.\vspace{-0.08cm}
\item [$(c)$] For the $\bt$-th section, we set it to be a string drawn from $\yesd_{\ell-1}$.
\end{enumerate}\end{flushleft}
See Figure \ref{illustration0} for an illustration.
A draw of a string from $\nod_\ell^*$ is obtained in a similar fashion.
First we draw $\bt$ from $[n]$ uniformly at random, and then
\begin{flushleft}\begin{enumerate}
\item [$(a')$] For each $i<\bt$, we independently  set the $i$-th section to be a string drawn
  from $\nod_{\ell-1}$ with probability $1/2$ or a string drawn from $\yesd_{\ell-1}'$ with probability $1/2$.
  (Note that this is the same as step $(a)$ above in the definition of $\yesd_\ell^*$.) \vspace{-0.08cm}
\item [$(b')$]  For each $i>\bt$, we set the $i$-th section to be all $1$ (this is different from $(b)$ above).\vspace{-0.08cm}
\item [$(c')$] For the $\bt$-th section, we set it to be a string drawn from $\nod_{\ell-1}'$ (this is different from $(c)$).
\end{enumerate}\end{flushleft}
Again see Figure \ref{illustration0} for an illustration.
Given (\ref{eq:sum0}), we see that a string $\xx$
  drawn from $\yesd_\ell^*$ (or from $\nod_\ell^*$)
  has $\SUM(\xx)$ equal to $2^{N_\ell^*}$ (respectively, equal to $2^{N_\ell^*}- {\ell}$).

With the definitions of $\yesd_\ell^*$ and $\nod_\ell^*$ in hand, we now use them to define
  $(\yesd_\ell,\nod_\ell)$ and $(\yesd_\ell',\nod_\ell')$ so that every 
  string $\xx$ drawn from these distributions should
  have $\SUM(\xx)$ equal to the values given in (\ref{eq:sum2}).
Recall that $N_\ell=N^*_\ell+1$.

A string $\xx=(x_{i,j})\in \{0,1\}^{(\ell+ {1}) \times N_\ell}$ drawn from
   $\yesd_\ell'$ is obtained as follows.
First we draw a string $\zz$ from $\yesd_\ell^*$ and put it in 
  columns $\{2,\ldots,N_\ell\}$ and rows $\{1,\dots, {\ell}\}$ of $\xx$, i.e.,
  $x_{i,j}=z_{i,j-1}$ for all $i\in [ {\ell}]$ and $j\in \{2,\ldots,N_\ell\}$.
  For the remaining positions (in the first column and the last row), we set
  $x_{i,1}=0$ for all $i\in [ {\ell+1}]$ and $x_{\ell+ {1},j}=1$
  for all $j>1$.
The other distribution $\nod_\ell'$ is defined similarly, 
  except that we draw the string $\zz$ from $\nod_\ell^*$ instead of from $\yesd_\ell^*$.
The definition of $\yesd_\ell'$ and $\nod_\ell'$ is illustrated in Figure \ref{illustration}.

\begin{figure}[t]
\captionsetup{justification=centering}
\begin{center}
\begin{picture}(400,190)(-40, 40)
\put(68,212){{\small $\xx \sim \yesd'_\ell$}}
\put(-1,202){$\overbrace{\mbox{~~~~~~~~~~~~~~~~~~~~~~~~~~~~~~~~~~~~~~~}}$}
\linethickness{1pt}
\put(0,200){\line(1,0){139}}
\put(0,153){\line(1,0){139}}
\put(0,200){\line(0,-1){47}}
\put(139,200){\line(0,-1){47}}
\linethickness{0.5pt}
\put(10,200){\line(0,-1){47}}
 \put(0,165){\line(1,0){139}}
\linethickness{1pt}
\put(3,191){{\small $0$}}
\put(4.2,178){{\small $\vdots$}} 
\put(3,168){{\small $0$}}
\put(55,180){{\small $\zz \sim \yesd_\ell^\ast$}}
\put(3,156){{\small $0$~$1 \cdots \cdots \cdots \cdots \cdots \cdots \cdots \cdots \cdot \cdot 1$}}

\put(268,212){{\small $\xx \sim \nod'_\ell$}}
\put(199,202){$\overbrace{\mbox{~~~~~~~~~~~~~~~~~~~~~~~~~~~~~~~~~~~~~~~}}$}
\linethickness{1pt}
\put(200,200){\line(1,0){139}}
\put(200,153){\line(1,0){139}}
\put(200,200){\line(0,-1){47}}
\put(339,200){\line(0,-1){47}}
\linethickness{0.5pt}
\put(210,200){\line(0,-1){47}}
 \put(200,165){\line(1,0){139}}
\linethickness{1pt}
\put(203,191){{\small $0$}}
\put(204.2,178){{\small $\vdots$}} 
\put(203,168){{\small $0$}}
\put(255,180){{\small $\zz \sim \nod_\ell^\ast$}}
\put(203,156){{\small $0$~$1 \cdots \cdots \cdots \cdots \cdots \cdots \cdots \cdots \cdot \cdot 1$}}

\put(68,112){{\small $\xx \sim \yesd_\ell$}}
\put(-1,102){$\overbrace{\mbox{~~~~~~~~~~~~~~~~~~~~~~~~~~~~~~~~~~~~~~~}}$}
\linethickness{1pt}
\put(0,100){\line(1,0){139}}
\put(0,53){\line(1,0){139}}
\put(0,100){\line(0,-1){47}}
\put(139,100){\line(0,-1){47}}
\linethickness{0.5pt}
\put(10,100){\line(0,-1){47}}
 \put(0,65){\line(1,0){139}}
\linethickness{1pt}
\put(3,91){{\small $0$}}
\put(4.2,78){{\small $\vdots$}} 
\put(3,68){{\small $0$}}
\put(55,80){{\small $\zz \sim \yesd_\ell^\ast$}}
\put(3,56){{\small $1$~\hskip1.3pt$0\cdots \cdots \cdots \cdots \cdots \cdots \cdots \cdots \cdot  0$}}

\put(268,112){{\small $\xx \sim \nod_\ell$}}
\put(199,102){$\overbrace{\mbox{~~~~~~~~~~~~~~~~~~~~~~~~~~~~~~~~~~~~~~~}}$}
\linethickness{1pt}
\put(200,100){\line(1,0){139}}
\put(200,53){\line(1,0){139}}
\put(200,100){\line(0,-1){47}}
\put(339,100){\line(0,-1){47}}
\linethickness{0.5pt}
\put(210,100){\line(0,-1){47}}
 \put(200,65){\line(1,0){139}}
\linethickness{1pt}
\put(203,91){{\small $0$}}
\put(204.2,78){{\small $\vdots$}} 
\put(203,68){{\small $0$}}
\put(255,80){{\small $\zz \sim \nod_\ell^\ast$}}
\put(203,56){{\small $1$~~(binary represent. of $\ell - 1$)}}

\end{picture}\vspace{0.3cm}
\caption{Illustrations of how the $\yesd_\ell',\nod_\ell',\yesd_\ell$ and $\nod_\ell$ distributions\vspace{0.03cm}\\ are defined from the $\yesd_\ell^\ast$ and $\nod_\ell^\ast$ distributions.}\vspace{-0.3cm}
\label{illustration}
\end{center}
\end{figure}
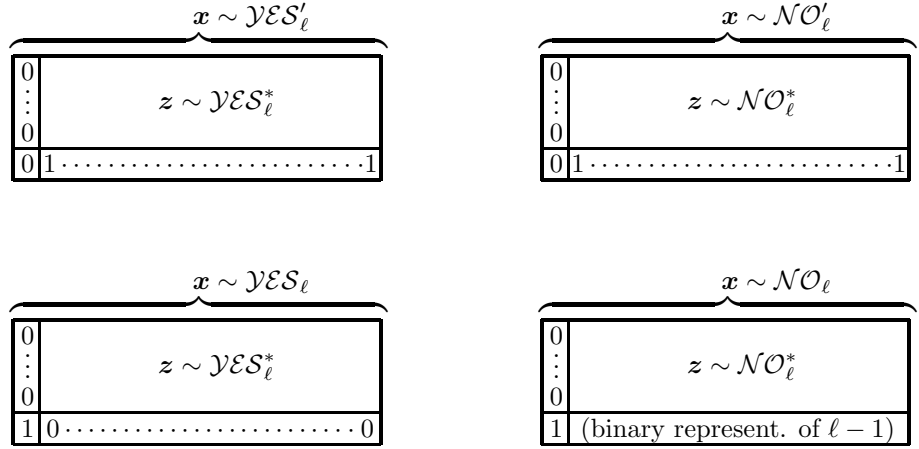

For the other pair $(\yesd_\ell,\nod_\ell)$,
  a string $\xx$ drawn from $\yesd_\ell$ is obtained as follows.  As before, 
  we first draw a string $\zz$ from $\yesd_\ell^*$ and put it in
  columns $\{2,\ldots,N_\ell\}$ and rows $\{1,\dots, {\ell}\}$ of $\xx$.
Then we set $x_{\ell+ {1},1}=1$ and all other variables on the 
  first row and last column of $\xx$ to be $0$.
For the other distribution $\nod_\ell$, we similarly draw $\zz$ from $\nod_\ell^*$ and 
  put it in columns $\{2,\ldots,N_\ell\}$ and rows $\{1,\dots, {\ell}\}$ of $\xx$.
We set~$x_{\ell+ {1},1}=1$ and all other variables on the first column to be $0$.
We set the last row, i.e., $x_{\ell+ {1},j}$ with $j\in \{2,\ldots,N_\ell\}$,
  to be the binary representation of $\ell-1$. 
(This is well defined since $N^\ast_\ell \geq n\ge 2^{{60}d}\gg \log d \ge \log \ell$.)
As before, see Figure \ref{illustration} for an illustration of the definition of $\yesd_\ell'$ and $\nod_\ell'$.

We record the following two useful facts about $N_\ell$ and the distributions:

\begin{fact}
$N_d\le N$.
\end{fact}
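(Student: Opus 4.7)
The plan is simply to verify the chain of inequalities already displayed inline when the sequence $N_\ell$ was introduced. I would unroll the recurrence $N_\ell \eqdef n \cdot N_{\ell-1} + 1$ starting from the base value $N_1 = n/\epsilon = n \cdot 2^{12d}$ (using the definition $\epsilon \eqdef 2^{-12d}$). The key arithmetic observation is that the additive $+1$ can be absorbed by doubling the multiplicative factor: since $N_{\ell-1} \ge 1$ for every $\ell \ge 2$, we have $N_\ell = n \cdot N_{\ell-1} + 1 \le (n+1) N_{\ell-1} \le 2n \cdot N_{\ell-1}$, where the last inequality uses $n \ge 1$ (which is guaranteed by the hypothesis $n \ge 2^{60d}$).

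Iterating this one-step bound from $\ell$ down to $1$ yields $N_\ell \le (2n)^{\ell-1} \cdot N_1 = (2n)^{\ell-1} \cdot n \cdot 2^{12d} \le (2n)^\ell \cdot 2^{12d}$. Specializing to $\ell = d$ gives
\[
N_d \;\le\; (2n)^d \cdot 2^{12d} \;=\; \bigl(2 \cdot 2^{12} \cdot n\bigr)^d \;=\; (2^{13} n)^d,
\]
and the hypothesis $N \ge (2^{13}n)^d$ from the statement of Theorem~\ref{thm:lower} then immediately yields $N_d \le N$, as required. I do not anticipate any real obstacle here: the fact is a straightforward bookkeeping check on the recurrence, and the only subtlety (trivial once noticed) is the use of the $+1 \le N_{\ell-1}$ absorption trick to replace $n$ by $2n$ at each step so that the bound telescopes cleanly into the form $(2^{13}n)^d$ matching the hypothesis.
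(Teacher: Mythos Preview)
Your proposal is correct and follows exactly the approach the paper takes: the paper embeds the very same chain of inequalities $N_\ell \le (2n)^\ell \cdot 2^{12d} \le (2^{13}n)^d \le N$ inline at the point where $N_\ell$ is defined, and states Fact~2.2 without further proof. Your write-up simply spells out the recurrence-unrolling and the $+1 \le N_{\ell-1}$ absorption that justify the first step of that chain.
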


\begin{fact}
For each $\ell \in [d]$, a string $\xx$ drawn from $\yesd_\ell$, $\nod_\ell$,
  $\yesd_\ell'$, $\nod_\ell'$
  has $\SUM(\xx)$ equal to $$2^{N_\ell},\quad 2^{N_\ell}-1,\quad 
  2^{N_\ell}-1\quad \text{and} \quad 2^{N_\ell}-(\ell+ {1}).$$ \end{fact}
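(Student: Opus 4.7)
The plan is to prove the fact by induction on $\ell$, which amounts to a bookkeeping calculation exploiting the telescoping structure built into the section decomposition of Section~\ref{sec:construct2}. The base case $\ell = 1$ follows directly from the sampling procedures in Section~\ref{sec:construct1}: a $\yesd_1$-sample with index $\rr = r$ contributes $2 \cdot 2^{N_1 - r}$ from column $r$ and a single unit from each earlier column, summing to $2^{N_1}$; a $\nod_1$-sample has exactly one $1$ in every column and so yields $2^{N_1} - 1$; the claim for $\yesd_1'$ is immediate since $\yesd_1'$ equals $\nod_1$; and for $\nod_1'$ one checks that negating a $\yesd_1$-sample zeroes column $r$, keeps the earlier columns as single-$1$ columns, and turns the later columns into $(1,1)$, producing $\SUM = 2^{N_1} - 2 = 2^{N_1} - (1 + 1)$.

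For the inductive step I would first evaluate $\SUM$ on the auxiliary pair $(\yesd_\ell^\ast, \nod_\ell^\ast)$. Partitioning by sections and using place-value gives
$$\SUM(\zz) \;=\; \sum_{i=1}^n 2^{(n-i) N_{\ell-1}} \cdot \SUM(\mathrm{section}_i).$$
For $\zz \sim \yesd_\ell^\ast$ with threshold $\bt = t$, the inductive hypothesis tells us the section sums are $2^{N_{\ell-1}} - 1$ for $i < t$ (regardless of whether a given section was drawn from $\nod_{\ell-1}$ or $\yesd_{\ell-1}'$), $2^{N_{\ell-1}}$ for $i = t$, and $0$ for $i > t$. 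A single telescoping cancellation shows the total equals $2^{n N_{\ell-1}} = 2^{N_\ell^\ast}$, independent of $t$. For $\nod_\ell^\ast$, the same telescope handles the earlier sections, the central section $i = t$ contributes $2^{(n-t) N_{\ell-1}}(2^{N_{\ell-1}} - \ell)$ by the inductive hypothesis for $\nod_{\ell-1}'$, and each all-ones section with $i > t$ (which has $\ell$ rows) contributes $\ell(2^{N_{\ell-1}} - 1)$; a second telescope in the tail pools everything into $2^{N_\ell^\ast} - \ell$.

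With the $(\yesd_\ell^\ast, \nod_\ell^\ast)$ values established, the four target identities drop out by accounting for the extra column and row that surround the embedded $\zz$. Since $N_\ell = N_\ell^\ast + 1$, shifting $\zz$ into columns $2, \ldots, N_\ell$ preserves its $\SUM$. Then $\yesd_\ell$ picks up $2^{N_\ell - 1}$ from $x_{\ell+1, 1} = 1$, giving $2^{N_\ell}$; $\nod_\ell$ picks up $2^{N_\ell - 1}$ from the same bit together with the value $\ell - 1$ encoded binarily across the last row, giving $(2^{N_\ell^\ast} - \ell) + 2^{N_\ell - 1} + (\ell - 1) = 2^{N_\ell} - 1$; and both $\yesd_\ell'$ and $\nod_\ell'$ gain $2^{N_\ell - 1} - 1$ from an all-ones last row on columns $2, \ldots, N_\ell$, yielding $2^{N_\ell} - 1$ and $2^{N_\ell} - (\ell + 1)$ respectively. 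No step poses a real obstacle; the only care needed is tracking the three indices $N_\ell, N_\ell^\ast, N_{\ell-1}$ and the endpoints of the two telescoping sums correctly.
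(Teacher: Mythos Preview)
Your proposal is correct and follows exactly the inductive structure the paper has in mind: the paper states this as a fact without a detailed proof, simply noting inline (after defining $\yesd_\ell^\ast$ and $\nod_\ell^\ast$) that ``Given (\ref{eq:sum0}), we see that a string $\xx$ drawn from $\yesd_\ell^*$ (or from $\nod_\ell^*$) has $\SUM(\xx)$ equal to $2^{N_\ell^*}$ (respectively, equal to $2^{N_\ell^*}-\ell$),'' and then leaving the reader to read off the four values from Figure~\ref{illustration}. Your telescoping calculations make that implicit verification explicit and are all correct.
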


An important property of the $(\yesd_\ell,\nod_\ell)$ pair and of the $(\yesd_\ell',\nod_\ell')$ pair --- which in fact
motivated the above definitions of these distributions in terms of $\yesd_{\ell}^\ast$ and $\nod_{\ell}^\ast$ --- is that 
they are \emph{at least as hard} to distinguish as $(\yesd_\ell^*,\nod_\ell^*)$ for monotone majority circuits.

This is made formal in the following two lemmas.

\begin{lemma}\label{lem:trivial1}
Given any monotone majority circuit $F$ over $\{0,1\}^{(\ell+1) \times N_\ell}$, 
  there is a monotone majority circuit $F^\ast$ over $\{0,1\}^{\ell \times N_\ell^\ast}$ of the same size and depth as $F$ such that
$$
\Pr_{\xx\in \yesd_\ell'}\big[F(\xx)=1\big]+\Pr_{\yy\in \nod_\ell'}\big[F(\yy)=0\big]
= \Pr_{\xx\in \yesd_\ell^*}\big[F^\ast(\xx)=1\big]+\Pr_{\yy\in \nod_\ell^*}\big[F^\ast(\yy)=0\big].
$$
\end{lemma}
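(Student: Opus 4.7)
The plan is to observe that the distributions $\yesd_\ell'$ and $\nod_\ell'$ differ only in the embedded $\ell \times N_\ell^\ast$ block $\zz$ drawn from $\yesd_\ell^\ast$ or $\nod_\ell^\ast$; the remaining coordinates (the first column, which is all $0$, and the last row in columns $\{2,\dots,N_\ell\}$, which is all $1$) are \emph{deterministically the same constants} under both $\yesd_\ell'$ and $\nod_\ell'$. This strongly suggests defining $F^\ast$ by hard-wiring these constants into $F$ and interpreting the remaining input wires as the entries of $\zz$.

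Concretely, I would define $F^\ast$ as the restriction $F \uhr \rho$, where $\rho$ fixes $x_{i,1}=0$ for every $i\in[\ell+1]$ and $x_{\ell+1,j}=1$ for every $j\in\{2,\dots,N_\ell\}$, leaving the $\ell \cdot N_\ell^\ast$ coordinates $\{x_{i,j}: i\in[\ell],\, j\in\{2,\dots,N_\ell\}\}$ as free variables (which I relabel in the obvious way as the entries of $\zz\in\{0,1\}^{\ell\times N_\ell^\ast}$). The first step is to verify that restricting a monotone majority circuit to a partial assignment of constants yields another monotone majority circuit with no larger size or depth: for any gate $\THR_{t,m}$, a wire fixed to $0$ is simply deleted (giving $\THR_{t,m-1}$), while a wire fixed to $1$ both is deleted and decreases the threshold (giving $\THR_{t-1,m-1}$). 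If the threshold drops to $\le 0$ or rises above the new fan-in, the gate becomes a constant, which can then be propagated to its successors in the same way. None of these operations increases depth or size, so $F^\ast$ is a monotone majority circuit of depth and size bounded by those of $F$ (and we may pad trivially if we wish these to be exactly equal).

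The second step is to check the distributional identity, which is essentially tautological from the construction. By the definition of $\yesd_\ell'$ (resp. $\nod_\ell'$) in Section \ref{sec:construct2}, sampling $\xx \sim \yesd_\ell'$ (resp. $\xx\sim \nod_\ell'$) is exactly the same as sampling $\zz \sim \yesd_\ell^\ast$ (resp. $\zz\sim\nod_\ell^\ast$) and applying the restriction $\rho$ described above, so $F(\xx) = F^\ast(\zz)$ pointwise under this coupling. Taking probabilities gives
$$
\Pr_{\xx\sim \yesd_\ell'}[F(\xx)=1] = \Pr_{\zz\sim \yesd_\ell^\ast}[F^\ast(\zz)=1]
\quad\text{and}\quad
\Pr_{\yy\sim \nod_\ell'}[F(\yy)=0] = \Pr_{\zz\sim \nod_\ell^\ast}[F^\ast(\zz)=0],
$$
which summed together yield the claimed equality.

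There is really no hard step here: the lemma is a bookkeeping statement that records how the ``scaffolding'' in Figure \ref{illustration} lets one transfer hardness between the starred and primed distributions. The only point requiring any care is the verification that monotone unweighted threshold gates are closed under constant restrictions, which as noted above is immediate since fixing inputs to $0$ or $1$ just decrements the fan-in (and possibly the threshold) of a $\THR_{t,m}$ gate.
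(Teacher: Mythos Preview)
Your proposal is correct and follows exactly the same approach as the paper: hard-wire the first column to $0$ and the last row (excluding the first column) to $1$, obtaining $F^\ast$ as a restriction of $F$, and then read off the two equalities directly from the definitions of $\yesd_\ell'$ and $\nod_\ell'$. Your write-up is in fact more careful than the paper's, since you spell out why restriction preserves the monotone majority circuit class and why size and depth do not increase.
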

\begin{proof}
Given $F$, we hard-wire the variables in the first column to be $0$ and the
  rest of the variables in the last row to be $1$.
Let $F^\ast$ denote the new monotone majority circuit obtained from $F$
  of the same size and depth.
The definition of $\yesd_\ell'$ and $\nod_\ell'$ from $\yesd_\ell^*$ and $\nod_\ell^*$ 
  implies that
\begin{align*}
\Pr_{\xx\in \yesd_\ell'}\big[F(\xx)=1\big]&=\Pr_{\xx\in \yesd_\ell^*}\big[F^\ast(\xx)=1\big] \ \ \ \text{and}
\\[0.4ex]
 \Pr_{\yy\in \nod_\ell'}\big[F(\yy)=0\big]
&= \Pr_{\yy\in \nod_\ell^*}\big[F^\ast(\yy)=0\big].
\end{align*}
The lemma then follows.
\end{proof}

\begin{lemma}\label{lem:trivial2}
Given any  monotone majority circuit $F$ over $\{0,1\}^{(\ell+1) 
  \times N_\ell}$,
  there~is a  monotone~majority circuit $F^\ast$ over $\{0,1\}^{\ell \times N_\ell^\ast}$ of the same size and depth as $F$ such that
$$
\Pr_{\xx\in \yesd_\ell }\big[F(\xx)=1\big]+\Pr_{\yy\in \nod_\ell }\big[F(\yy)=0\big]
\le \Pr_{\xx\in \yesd_\ell^*}\big[F^\ast(\xx)=1\big]+\Pr_{\yy\in \nod_\ell^*}\big[F^\ast(\yy)=0\big].
$$
\end{lemma}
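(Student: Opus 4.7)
The plan is to obtain $F^\ast$ from $F$ by hard-wiring the coordinates that lie \emph{outside} the ``$\zz$-block'' (rows $1,\ldots,\ell$ and columns $2,\ldots,N_\ell$) according to the pattern that $\yesd_\ell$ places there. Concretely, I will restrict $F$ by setting $x_{i,1}=0$ for every $i\in[\ell]$, $x_{\ell+1,1}=1$, and $x_{\ell+1,j}=0$ for every $j\in\{2,\ldots,N_\ell\}$. Let $F^\ast$ be the resulting monotone majority circuit over the remaining $\ell\times N_\ell^\ast$ variables; since hard-wiring constants into a monotone majority circuit never increases size or depth, $F^\ast$ has the same size and depth as $F$ (at worst; one can pad to equality).

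Under this restriction the $\yesd_\ell$ side becomes an equality. By the very construction of $\yesd_\ell$ from $\yesd_\ell^\ast$ (in which the hard-wired coordinates receive exactly the values chosen above), feeding $\xx\sim\yesd_\ell$ into $F$ is distributionally identical to feeding $\zz\sim\yesd_\ell^\ast$ into $F^\ast$, so
\[
\Pr_{\xx\sim\yesd_\ell}\big[F(\xx)=1\big] \;=\; \Pr_{\zz\sim\yesd_\ell^\ast}\big[F^\ast(\zz)=1\big].
\]

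For the $\nod_\ell$ side I will invoke monotonicity. A sample $\yy\sim\nod_\ell$ agrees with the hard-wired pattern on the first column and on $x_{\ell+1,1}$, but its entries $x_{\ell+1,2},\ldots,x_{\ell+1,N_\ell}$ encode the binary representation of $\ell-1$ instead of being all zeros. Let $\yy'$ be obtained from $\yy$ by forcing those last-row entries to $0$; then $\yy'\le \yy$ coordinate-wise, so monotonicity of $F$ gives $F(\yy')\le F(\yy)$, whence $\{F(\yy)=0\}\subseteq\{F(\yy')=0\}$. Moreover, $\yy'$ now conforms to the hard-wired pattern, and its $\ell\times N_\ell^\ast$ block is distributed exactly as a draw from $\nod_\ell^\ast$. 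Therefore
\[
\Pr_{\yy\sim\nod_\ell}\big[F(\yy)=0\big] \;\le\; \Pr_{\zz\sim\nod_\ell^\ast}\big[F^\ast(\zz)=0\big].
\]
Summing the two bounds yields the lemma. I do not anticipate any real obstacle: the entire content is that monotonicity converts the extra $1$'s that $\nod_\ell$ carries in its last row (relative to the $\yesd_\ell$ template used for the hard-wiring) into a one-sided inequality, which is precisely why Lemma~\ref{lem:trivial1} is an equality (the templates of $\yesd_\ell'$ and $\nod_\ell'$ coincide) while Lemma~\ref{lem:trivial2} is only an inequality.
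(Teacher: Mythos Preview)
Your proposal is correct and is essentially identical to the paper's own proof: the paper also obtains $F^\ast$ by hard-wiring $x_{\ell+1,1}=1$ and setting the rest of the first column and last row to $0$, obtains equality on the $\yesd_\ell$ side directly from the definition, and derives the inequality on the $\nod_\ell$ side via monotonicity of $F$. Your explanation of the monotonicity step is slightly more explicit than the paper's, but the argument is the same.
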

\begin{proof}
Given $F$, we hard-wire $x_{\ell+ {1},1}$ to be $1$ and the rest of the variables
  in the first column and the last row to be $0$.
Let $F^\ast$ denote the resulting  monotone majority circuit obtained from 
  $F$ of the same size and depth.
The definition of $\yesd_\ell$ and $\nod_\ell$ from $\yesd_\ell^*$ and $\nod_\ell^*$ implies that
\begin{align*}
\Pr_{\xx\in \yesd_\ell}\big[F(\xx)=1\big]&=\Pr_{\xx\in \yesd_\ell^*}\big[F^\ast(\xx)=1\big] \ \ \ \text{and}
\\[0.4ex]
 \Pr_{\yy\in \nod_\ell}\big[F(\yy)=0\big]
&\le \Pr_{\yy\in \nod_\ell^*}\big[F^\ast(\yy)=0\big],
\end{align*}
where the inequality follows from the monotonicity of $F$.
The lemma then follows.
\end{proof}

\subsection{The Key Induction Lemma.}\label{sec:induction}

Given distributions defined in Sections \ref{sec:construct1} and \ref{sec:construct2},
  we prove the following key technical lemma.
  
Recall that $\epsilon=2^{-12d}$. Below we let $M=2^{\epsilon^5 n}$.

\begin{lemma}\label{l:inductive_step}
Let $\ell\in \{2, \ldots, d\}$.
Suppose that any depth-$(\ell-1)$ monotone majority circuit $F$ over $\{0,1\}^{{\ell} \times N_{\ell-1}}$ 
  of size at most $M$ satisfies
\begin{align}
&\Pr_{\bx\sim \yesd_{\ell-1}} \big[ F(\bx)=1\big]+\Pr_{\by\sim \nod_{\ell-1}}\big[F(\by)=0\big]\le 1+7^{\ell- {2}}\epsilon\ \ \ \text{and}\nonumber\\[0.5ex] &
\Pr_{\bx\sim \yesd_{\ell-1}'} \big[ F(\bx)=1\big]+ \Pr_{\by\sim \nod_{\ell-1}'}\big[F(\by)=0\big]\le 1+7^{\ell- {2}}\epsilon.
\label{eq:condcond}
\end{align}
Then any depth-$\ell$ monotone majority circuit $F^\ast$ over $\{0,1\}^{ {\ell} \times N^\ast_\ell}$
  of size at most $M$ satisfies
$$
\Pr_{\bx\sim \yesd_{\ell}^*} \big[ F^\ast(\bx)=1\big]+\Pr_{\by\sim \nod_{\ell}^*}\big[F^\ast(\by)=0\big]\le 1+7^{\ell - {1}} \epsilon.
$$
\end{lemma}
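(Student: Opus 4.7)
The plan is to reduce the analysis of the depth-$\ell$ circuit $F^\ast$ to that of its depth-$(\ell-1)$ subcircuits, exploiting both monotonicity and the randomness of the threshold~$\bt$ in the construction of $(\yesd^\ast_\ell, \nod^\ast_\ell)$.

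\textbf{Setup and conditioning on $\bt$.} Write $F^\ast(x) = \THR_{\tau, m}(G_1(x), \ldots, G_m(x))$, where each $G_i$ is a depth-$\le(\ell-1)$ monotone majority circuit of size $\leq M$ and $m \leq M$. Averaging over the uniform $\bt \in [n]$ gives
\[
\Pr_{\yesd^\ast_\ell}[F^\ast = 1] + \Pr_{\nod^\ast_\ell}[F^\ast = 0] \;=\; 1 + \frac{1}{n}\sum_{t=1}^{n}\Delta_t,
\]
where $\Delta_t := \Pr_{\yesd^\ast_\ell \mid \bt = t}[F^\ast = 1] - \Pr_{\nod^\ast_\ell \mid \bt = t}[F^\ast = 1]$; it thus suffices to show $\sum_t \Delta_t \leq n \cdot 7^{\ell - 1}\epsilon$.

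\textbf{Per-$t$ restriction and monotonicity.} Fix $t$ and let $\bu$ denote the random content of sections $1, \ldots, t-1$ (each independently from $\tfrac{1}{2}\nod_{\ell-1} + \tfrac{1}{2}\yesd'_{\ell-1}$, and identical in $\yesd^\ast_\ell$ and $\nod^\ast_\ell$). For each outcome $u$, let $F^{(u)}_0$ and $F^{(u)}_1$ be the depth-$\ell$ monotone majority circuits on section-$t$ inputs obtained from $F^\ast$ by fixing sections $<t$ to $u$ and fixing sections $>t$ to all-$0$ or all-$1$, respectively. Monotonicity gives $F^{(u)}_0 \leq F^{(u)}_1$ pointwise, and by the independence built into the construction,
\[
\Delta_t \;=\; \E_{\bu}\Bigl[\Pr_{\bv \sim \yesd_{\ell-1}}\bigl[F^{(\bu)}_0(\bv) = 1\bigr] \;-\; \Pr_{\bv' \sim \nod'_{\ell-1}}\bigl[F^{(\bu)}_1(\bv') = 1\bigr]\Bigr].
\]

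\textbf{Hybrid across distributions on section $t$.} Bridge $\yesd_{\ell-1}$ and $\nod'_{\ell-1}$ by inserting the intermediate distributions $\nod_{\ell-1}$ and $\yesd'_{\ell-1}$, and use $\Pr_D[F^{(u)}_0 = 1] \leq \Pr_D[F^{(u)}_1 = 1]$ (for any distribution $D$ on section $t$) to handle the ``sections $>t$'' switch for free. The resulting telescoping expression splits the per-$u$ distinguishing into three kinds of terms: (i) terms comparing $F^{(u)}_0$ (or $F^{(u)}_1$) on the \emph{standard} pairs $(\yesd_{\ell-1}, \nod_{\ell-1})$ and $(\yesd'_{\ell-1}, \nod'_{\ell-1})$ targeted by the inductive hypothesis; (ii) a non-positive monotonicity term coming from $F^{(u)}_0 \le F^{(u)}_1$; and (iii) a ``bridge'' contribution relating $\nod_{\ell-1}$ and $\yesd'_{\ell-1}$ (both have equal SUM $2^{N_{\ell-1}}-1$ and similar combinatorial structure), which is either cancelled by monotonicity or bounded by direct inspection.

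\textbf{Main obstacle: depth reduction via subcircuit IH.} The central difficulty is that $F^{(u)}_0$ and $F^{(u)}_1$ are depth-$\ell$ circuits, so the inductive hypothesis does not apply to them directly. The resolution is to apply the IH not to $F^{(u)}_b$ as a whole, but to the depth-$(\ell-1)$ \emph{subcircuits} $G_i$ of $F^\ast$. For any fixing $\sigma$ of the sections $\neq t$, each $G_i \uhr \sigma$ is a depth-$\le(\ell-1)$ monotone majority circuit of size $\leq M$ on $\{0,1\}^{\ell \times N_{\ell-1}}$, so by the IH its distinguishing advantage on each standard pair is at most $7^{\ell-2}\epsilon$. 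Converting these per-subcircuit bounds into a bound on the top threshold gate --- without incurring a prohibitive factor of $m$ --- is the hardest part of the argument: one must exploit the averaging over $\bu$ (which washes out adversarial correlations among the $G_i$) together with monotonicity of $\THR_\tau$, and the calibration of parameters ($\epsilon = 2^{-12d}$, $n \geq 2^{60d}$, $M = 2^{\epsilon^5 n}$) is tuned precisely so that the losses from the hybrid hops, from the $n$ values of $t$, and from the depth-reduction step together sum to at most $7^{\ell-1}\epsilon$, closing the induction.
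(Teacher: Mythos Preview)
Your decomposition into $\Delta_t$'s and the observation that the inductive hypothesis must be applied to the depth-$(\ell-1)$ subcircuits $G_i$ rather than to $F^{(u)}_b$ are both correct and match the paper's setup. However, the proposal has a genuine gap at exactly the point you flag as ``the hardest part'': you give no mechanism for passing from per-subcircuit advantage bounds to a bound on $F^\ast$ without losing a factor of $m$. The assertion that ``averaging over $\bu$ washes out adversarial correlations among the $G_i$'' is not how the argument goes, and there is no reason to expect it to work --- a threshold of $m$ functions each with advantage $\delta$ can have advantage much larger than $\delta$, and averaging over the prefix does not change this.

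The paper's actual mechanism is entirely different from your hybrid-plus-averaging picture: it is a \emph{potential/progress} argument. Call a prefix $z$ ``good'' if the conditioned advantage (your $\Delta_t$ term, essentially) exceeds $1+6\delta$. The key claim is that whenever the current prefix is good, appending one more random section drawn from $\calD$ has probability at least $\delta/4$ of \emph{trivializing} a $\delta^2/2$-fraction of the remaining nontrivial subcircuits $G_i$ --- this is where the inductive hypothesis on the $(\yesd'_{\ell-1},\nod'_{\ell-1})$ pair is used, applied to each $G_i$ after a suitable restriction. Since $F^\ast$ has at most $M=2^{\epsilon^5 n}$ subcircuits, it can survive at most $O(\delta^3 n)$ such hits. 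An Azuma--Hoeffding argument then shows that with high probability the number of good rounds over $t=1,\dots,n$ is at most $O(\delta^2 n)$; the remaining (not-good) rounds each contribute at most $1+6\delta$, and summing gives $\sum_t \Delta_t \le 7\delta n$. Your proposal lacks this trivialization-and-counting idea entirely, and without it the depth-reduction step does not go through. (The ``bridge'' term (iii) in your hybrid is also left unjustified; in the paper's approach no such bridge is needed because the $(\yesd_{\ell-1},\nod_{\ell-1})$ and $(\yesd'_{\ell-1},\nod'_{\ell-1})$ hypotheses are used in structurally different places, not chained together.)
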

\begin{proof}
Recall that strings drawn from $\yesd_\ell^*$ and $\nod_\ell^*$ consist of $n$ sections.
For convenience, we refer to strings in $\{0,1\}^{{\ell} \times N_{\ell-1}}$ as  
  \emph{section strings}.

We begin by defining some useful distributions $\calD_{1},\dots,\calD_{n}$ over concatenations of section strings where $\calD_t$ is supported on concatenations of ${t-1}$ section strings.
 First,
let $\calD$ denote the following  distribution over section strings:
  $\xx\sim \calD$ is drawn from
  $\nod_{\ell-1}$ with probability $1/2$ and is drawn from $\yesd_{\ell-1}'$ with probability $1/2$. 
For each $t \in [n]$, we use $\calD_t$ to  denote the distribution of 
the  concatenation of ${t-1}$ section strings, each drawn from $\calD$
   independently.
(So $\calD_t$ is a distribution over $\{0,1\}^{{\ell} \times (t-1)N_{\ell-1}}$.)
Note that in the special case when $t=1$, $\calD_{1}$ is supported on the empty string only.
Note also that for $t \in [n]$, $\calD_t$ is generated precisely according to $(a)$ or $(a')$ from Section \ref{sec:construct2}  (recall that $(a)$ and $(a')$ are the same).

As in the statement of Lemma \ref{l:inductive_step}, let $F^\ast$ be a depth-$\ell$ monotone majority circuit on $\{0,1\}^{{\ell} \times N_\ell^*}$
  of size at most $M$.
We say a string $z\in \supp(\calD_t)$ for some $t \in [n]$ is \emph{good}
   with respect to $F^\ast$ if 
$$
\Pr_{\bx\sim \yesd_{\ell-1}}\big[F^\ast(z\circ \bx\circ \00)=1\big] +\Pr_{\by\sim \nod_{\ell-1}'}
  \big[F^\ast(z\circ \by\circ \11)=0\big]\ge 1+6\hspace{0.02cm}\delta,
$$
where we write $\00$ and $\11$ to denote the all-0 and all-1 strings
  in $\{0,1\}^{ {\ell} \times (n-t)N_\ell}$, and 
$  \delta \eqdef 7^{\ell- {2}}\epsilon.$  
  
Now we fix a $t \in [n]$ and fix a good string $z\in \supp(\calD_t)$.
Let $\rho_{z}$ be the restriction that fixes the first $t-1$ sections
  of variables of $F^\ast$ to be $z$ and leaves the remaining $n-(t-1)$ sections unfixed.
As $z$ is good, we have that $F^\ast\uhr \rho_{z}$ is nontrivial (i.e., $F^\ast\uhr \rho_{z}\not\equiv 0$ or $1$).
We write $H_1,\ldots,H_m$ (with multiplicities) to denote the set of all depth-$(\ell-1)$ 
  sub-circuits rooted at children of the output gate of $F^\ast$ such that
  $H_i\uhr \rho_z$ is nontrivial. {In other words, we assume that the same sub-circuit may appear multiple times in this list if the output majority gate in $F^\ast$ contains multiple wires to it.}  Since the size of $(F^\ast)$ is at most $M$, the fan-in of the output majority gate of $F^\ast$ is 
  at most $M$, and consequently $m \leq M.$
Since $F^\ast\uhr \rho$ is nontrivial there is a positive integer $h\in [M]$ such that
  $F^\ast\uhr\rho_z$ outputs $1$ if and only if at least $h$ many of 
  $H_1 \uhr\rho_z, \dots, H_m \uhr \rho_z$ output $1$.
The following claim shows that with non-negligible probability, a random $\xx \sim \calD$ is such that ``many'' $H_i$'s become trivial {(i.e., compute a constant function)} after a restriction by $\rho_{z \circ \xx}$:

\begin{claim}\label{claimclaim}
Suppose that $z$ is a good string in the support of $\calD_t$.
Then we have
\[
\Pr_{\xx \sim \calD}\Big[ \big|\{i \in [m]: H_i \uhr \rho_{z \circ \xx} \text{~is trivial~}\}\big| \geq \delta^2 m / 2\Big] \geq
\delta/ {4}.
\]
\end{claim}
\begin{proof}
We consider two cases: $h\ge m/2$ or $h<m/2$.
We focus on the latter below and the former case is symmetric.
Assume that $h<m/2$.
Since $z$ is good, we have
$$
\Pr_{\by\sim \nod_{\ell-1}'}\big[F^\ast(z\circ \by\circ \11)=0\big]\ge 1+6\delta-1=6\delta.
$$
However, if $y\in \supp(\nod_{\ell-1}')$ satisfies
  $F^\ast(z\circ y\circ \11)=0$, then by $h < m/2$
it must be the case that at least $m/2$ of the $H_i$'s have $H_i(z\circ y\circ \11)=0$,
 and hence 
\begin{equation}\label{tutu}
\E_{\by \sim \nod_{\ell-1}'}\big[\hspace{0.03cm}\text{number of $H_i$'s with $H_i(z\circ \by\circ \11)=0$}
\hspace{0.03cm}\big] \geq 3 \delta m.
\end{equation}
Let $I$ denote the set of $i\in [m]$ such that
\begin{equation}\label{dono}
\Pr_{\by\sim \nod_{\ell-1}'}\big[H_i(z\circ \by\circ \11)=0\big]\ge 2\delta.
\end{equation}
Then we have from (\ref{tutu}) that 
$$
|I|\cdot 1+(m-|I|)\cdot 2\delta\ge 3\delta m,
$$
which implies that $|I|\ge \delta m$.  

We write $\rho$ to denote the restriction over $\{0,1\}^{{\ell} \times N_\ell^\ast}$ that fixes the first $ {t-1}$ sections of input variables to be $z$
  and the last $ {(n-t)}$ sections of input variables to be all $1$, and leaves only the variables in the
  $t$-th section unfixed.
So each $H_i\uhr \rho$ is a depth-$(\ell-1)$ monotone majority circuit
  over $\{0,1\}^{ {\ell} \times N_{\ell-1}}$ of size at most $M$.
Then combining (\ref{dono}) and the assumption of the lemma, i.e., (\ref{eq:condcond}), applied to $H_i\uhr \rho$,
  we have that each $i\in I$ satisfies 
$$
\Pr_{\bx\sim \yesd_{\ell-1}'}\big[H_i(z\circ \bx\circ \11)=1\big]\le 1+\delta-2\delta=1-\delta,
$$
and thus,
\begin{equation} \label{eq:explowerbound}
\Pr_{\bx\sim \yesd_{\ell-1}'}\big[H_i(z\circ \bx\circ \11)=0\big]\ge \delta.
\end{equation}

Note that if an $x\in \supp(\yesd_{\ell-1}')$ satisfies
  $H_i(z\circ x\circ \11)=0$,
then we have $H_i\uhr \rho_{z\circ x}\equiv 0$ by the monotonicity of 
  $H_i$. 
Let $\bX$ be a random variable that denotes the number of $H_i$'s that become trivial after 
  $\rho_{z\circ\xx}$, where $\xx\sim \yesd_{\ell-1}'$. So by (\ref{eq:explowerbound})
  the expectation of $\bX$ is at least $\delta |I|$. 
Let $q$ denote the probability that $\bX\ge \delta |I|/2$. The lower bound $\E[\bX] \geq \delta |I|$ implies that 
$$
q\cdot |I|+(1-q)\cdot \delta |I|/2\ge \delta |I|, 
$$
and thus $q\ge \delta/2$.
Plugging in $|I|\ge \delta m$, we have that $\bX\ge \delta^2 m/2$ with
  probability at least $\delta/2$.
  
Finally, taking into account that a draw of $\xx \sim \calD$ is drawn from $ {\yesd_{\ell-1}'}$ with probability $1/2$, we see that with probability at least $\delta/4$ over a draw of $\xx \sim \calD$, we have that at least $\delta^2 m/2$ many $H_i$'s become trivial after $\rho_{z\circ\xx}$.
This finishes the proof of the claim. 
\end{proof}

Claim \ref{claimclaim} implies that if $\zz\sim \calD_t$ is good (with respect to $F^\ast$), then with probability
at least $\delta/4$ over a random draw of $\xx \sim \calD$, the restriction
$\rho_{\zz \circ \xx}$ trivializes at least  $(\delta^2/2)$-fraction of the depth-$(\ell - 1)$ sub-circuits of $F$ that are
\emph{not} trivialized by $\rho_{\zz}$. {Intuitively, this is useful because it means that we have a good chance of getting a significant simplification of $F$ (shrinking the fan-in of the top gate by a lot), and since $F$ is of size at most $M$ this cannot happen too many times.}
On the other hand, note that if $\zz$ is not good, then by definition we have
$$
\Pr_{\bx\sim \yesd_{\ell-1}}\big[F^\ast(\zz\circ \bx\circ \00)=1\big] +\Pr_{\by\sim \nod_{\ell-1}'}
  \big[F^\ast(\zz\circ \by\circ \11)=0\big]< 1+6\hspace{0.02cm}\delta,
$$
which intuitively is also useful for our purpose of bounding 
\begin{equation}\label{sum}
\Pr_{\bx\sim \yesd_{\ell}^*} \big[ F^\ast(\bx)=1\big]+\Pr_{\by\sim \nod_{\ell}^*}\big[F^\ast(\by)=0\big] 
\end{equation}
from above by $1+7\delta$.

To finish the proof of the lemma, we take the following alternative but equivalent view of
  (\ref{sum}).
Let $\bz_1,\ldots,\bz_n$ be a sequence of random section strings, each drawn from $\calD$ independently.  
By the definition of $\yesd_\ell^*$ and $\nod_\ell^*$ (recall Figure~\ref{illustration0}), we have that
\begin{align*}
\text{(\ref{sum})}\times n\hspace{0.06cm}=\hspace{0.06cm}
 \E_{\hspace{0.03cm}\bz_1,\ldots,\bz_n} & \left[\hspace{0.1cm}\sum_{t=1}^{n}\hspace{0.06cm} \Pr_{ \bx\sim \yesd_{\ell-1}} \big[F^\ast(\bz_1\circ \cdots \circ \bz_{t-1}\circ
  \bx\circ \00)=1\big]\right. \\ 
& \ \ \ \ \ \ \left.+\sum_{t=1}^{n}\Pr_{\by\sim \nod_{\ell-1}'}\big[F^\ast(\bz_1\circ\cdots \circ \bz_{t-1}\circ \by\circ \11)=0\big] \hspace{0.05cm}\right].
\end{align*}
This can be viewed as the expectation of a random variable $\bGamma$ generated as follows.
\begin{enumerate}
\item Start with $\bGamma=0$.\vspace{-0.18cm}
\item For each ``round'' $t=1,\dots,n$, independently draw $\zz_t$ from $\calD$ and add the following to $\bGamma$:
$$
\Pr_{\bx\sim \yesd_{\ell-1}}\big[F^\ast(\bz_1\circ \cdots \circ \bz_{t-1}\circ
  \bx\circ \00)=1\big]+\Pr_{\by\sim \nod_{\ell-1}'}\big[F^\ast(\bz_1\circ\cdots \circ \bz_{t-1}\circ \by\circ \11)=0\big].
$$
\end{enumerate}
So it suffices to show that $\E[\bGamma]\le (1+7\delta)n$.

For each of the $n$ rounds $t=1,\dots,n$, exactly one of the following
two possibilities must hold:
\begin{flushleft}
 \begin{enumerate}  
 \item The current string 
  $\bz_1\circ\cdots \circ \bz_{t-1}$ is not good.
In this case $\bGamma$ goes up by at most $1+6\delta$ in the $t$-th round.  Otherwise,

\item  The current string $\bz_1\circ\cdots \circ \bz_{t-1}$ is good.
In this case $\bGamma$ can go up by at most $2$ in the $t$-th round, but by our previous analysis (specifically, Claim \ref{claimclaim}),
  the number of nontrivial depth-$(\ell-1)$ subcircuits of $F^\ast$ (with multiplicities) rooted at children of the output gate of $F^\ast$ drops by a factor of $(1-\delta^2/2)$ with probability at least $\delta/4$
  when the draw of $\zz_t$ in the $t$-th round extends the restriction  to $\rho_{\bz_1 \circ \cdots \circ \bz_{t}}$.
Note that $F^\ast$ has size $M\le 2^{\varepsilon^5n}$ so it can survive at most $ {2}\delta^3 n$ many such drops  
  before $F^\ast$ becomes trivial; to see this, observe that
\begin{equation}\label{eq:int}
(1-\delta^2/2)^{ {2}\delta^3 n}\le \exp\left(-(\delta^2/ {2})\cdot ( {2}\delta^3 n)\right)
=\exp\left(-\delta^5 n\right) < 2^{-\varepsilon^5 n}.
\end{equation}  
Note further that once $F^\ast$ becomes trivial, $\bGamma$ goes up by $1$ in every subsequent round.
\end{enumerate}
\end{flushleft}
  
We use $\bS$, a random variable, to denote the total number of rounds $t \in [n]$ such that the current string
  $\bz_1\circ\cdots\circ\bz_{t-1}$ is good
  (note that once $F^\ast$ becomes trivial the current string cannot be good).
We claim that $\bS\le 32\delta^2 n$ with high probability.

\begin{claim}\label{martingale}
We have $\bS\le 32\delta^2 n$ with probability at least 
  $1-\exp(-n\delta^4 /2 )$.
\end{claim}
\begin{proof}
We say that round $t$ is good if the current string $\bz_1\circ\cdots \circ\bz_{t-1}$ is good.
We say that $F^\ast$ is \emph{hit} in the $t$-th round, if $\bz_1\circ\cdots\circ\bz_{t-1}$
  is good and the number of depth-$(\ell-1)$ subcircuitscuits of $F^\ast$ (with multiplicities) that are trivial under the restriction
  $\smash{\rho_{\bz_1\circ\cdots\circ\bz_{t-1}}}$
  drops by a factor of at least  $(1-\delta^2/2)$ under the restriction $\rho_{\bz_1\circ\cdots\circ \bz_{t-1}\circ\bz_{t}}$.
Then we can write $\Pr\hspace{0.03cm}[\bS \geq 32 {\delta}^2 n ]$ as 
\begin{align*}
&\Pr\big[\hspace{0.05cm}\bS \geq 32 {\delta}^2 n\ \&\ \text{$F^\ast$ is 
  hit $>  {2 \delta}^3n$ many times during the first $32 {\delta}^2 n $ of the good rounds}\hspace{0.05cm}\big]\\[0.3ex]&+
  \Pr\big[\hspace{0.05cm}\bS \geq 32 {\delta}^2 n\ \&\ \text{$F^\ast$ is hit $\le  {2\delta}^3 n $ many times during the first $32 {\delta}^2 n $ of the good rounds}\big].
\end{align*} 
The first of these probabilities is zero because of (\ref{eq:int}), i.e.  
  if $F^\ast$ is hit $ {2}\delta^3 n$ times then it is trivialized so no subsequent rounds can be good
  and thus $F^\ast$ cannot be hit again.

We focus on upper bounding the second probability. For each $i$ from $1$ to $32\delta^2 n$ we define the
  following random variable $\bY_i$ where
\begin{equation*}
\bY_i=\begin{cases}1 & \text{if $F^\ast$ is hit in the $i$-th good round or there 
  are fewer than $i$ good rounds}\\
0 & \text{otherwise (there are at least $i$ good rounds and $F^\ast$ is not hit in
  the $i$th good round).}
\end{cases}
\end{equation*}
The second probability we are interested in is at {most}
 $\Pr\hspace{0.03cm}[\sum_i \bY_i\le {2}\delta^3 n]$.
By Claim \ref{claimclaim}, we have
\begin{equation} \label{eq:yineq}
\Ex \big[\hspace{0.05cm}\bY_i\hspace{0.08cm}|\hspace{0.08cm}\bY_1=b_1,\ldots, \bY_{i-1}=b_{i-1}
\hspace{0.05cm}\big]\ge \delta/4
\end{equation}
for all $i$ and all $b_1,\ldots,b_{i-1}\in \{0,1\}$.
Let $\bX_0\equiv 0$ and $$\bX_i=\bX_{i-1}+\bY_i-\Ex \big[\hspace{0.05cm}\bY_i
\hspace{0.08cm} |\hspace{0.08cm} \bY_1 ,\cdots, \bY_{i-1} \hspace{0.05cm}\big].$$
Then $\bX_0,\bX_1,\ldots$ is a martingale that satisfies 
$|\bX_{i}-\bX_{i-1}|\le 1$
with probability $1$, {and we have that
\[
\bX_{32\delta^2 n} = \sum_{i=1}^{32 \delta^2 n} \left( \bY_i -\Ex \big[\hspace{0.05cm}\bY_i
\hspace{0.08cm} |\hspace{0.08cm} \bY_1 ,\cdots, \bY_{i-1} \hspace{0.05cm}\big]\right) 
\leq
 \sum_{i=1}^{32 \delta^2 n} \bY_i  - 8\delta^3 n,
\]
using (\ref{eq:yineq}) for the inequality.}
Applying the Azuma-Hoeffding inequality (see, e.g., Theorem 5.1 of \cite{DP09}) to
the martingale $\bX_0,\bX_1,\ldots$, we get that
$$
\Pr\left[ \sum_i \bY_i\le {2}\delta^3 n\right]\le \Pr\left[
\hspace{0.05cm}\bX_{ {32\delta^2 n}}\le {2}\delta^3n-8\delta^3 n\hspace{0.05cm}
\right]\le \exp\left(-\frac{ (6\delta^3 n)^2}{2\cdot 32\delta^2 n}\right)  {<}
\exp(- {n} \delta^4 {/2}).
$$
This finishes the proof of the claim.
\end{proof}

We are almost done with the proof of Lemma~\ref{l:inductive_step}.  Recalling that $\delta=7^{\ell- {2}}\epsilon$, we have that
\begin{equation}\label{haha}
\exp\left(- {n}\delta^4/2\right)\le \delta/4\quad\text{and}\quad \delta\le 2^{-8} 
\end{equation}
since $d\ge {2}$, $n\ge 2^{60d}$, $\eps = 2^{-12d}$ and $\ell \in  {\{2, \ldots, d\}}.$
It follows from Claim \ref{martingale} that 
\begin{align*}
\E\hspace{-0.05cm}\big[\bGamma\big]&\le \exp(- {n}\delta^4 /2)\cdot 2n+(1-\exp(- {n}\delta^4  {/2}))\cdot \left( 2\cdot 32\delta^2 n +
  (1+6\delta)\cdot (n-32\delta^2 n )\right)\\[0.3ex]
&<\delta n/2 + 64\delta^2 n+(1+6\delta)n\le (1+7\delta)n,
\end{align*}
where we also used the two inequalities in (\ref{haha}).
This finishes the proof of the lemma.
\end{proof}

\subsection{Proof of Theorem \ref{thm:lower}.}\label{sec:proof}

Finally we combine all the ingredients to prove Theorem \ref{thm:lower}.

Recall that $d\ge 2$, $n$ and $N$ are positive integers that satisfy $n\ge 2^{60d}$ and $N\ge {(2^{13}}n)^d\ge N_d$.
We also have $\epsilon=2^{-12d}$ and $M=2^{\epsilon^5 n}$.
We first prove by induction on $\ell$ that, for $\ell= {1},\ldots,d$, any
  monotone majority circuit $F$ over $\{0,1\}^{ {(\ell+1)} \times N_\ell}$ 
  of depth $\ell$ and size at most $M$ satisfies
\begin{align}
&\Pr_{\bx\sim \yesd_{\ell }} \big[ F(\bx)=1\big]+\Pr_{\by\sim \nod_{\ell }}\big[F(\by)=0\big]\le 1+ 7^{ {\ell-1}}
\epsilon,\quad\text{and}\nonumber\\[0.5ex] &
\Pr_{\bx\sim \yesd_{\ell }'} \big[ F(\bx)=1\big]+ \Pr_{\by\sim \nod_{\ell }'}\big[F(\by)=0\big]\le 1+
7^{ {\ell-1}}\epsilon.
\label{eq:condcond2}
\end{align}

The $\ell= {1}$ base case follows from Lemma \ref{lem:yes0no0}.
Now assume that (\ref{eq:condcond2}) holds for $\ell-1$.
By Lemma \ref{l:inductive_step}, any monotone majority circuit $F^\ast$ over $\{0,1\}^{ {\ell} \times N_\ell^*}$
  of depth $\ell$ and size at most $M$ satisfies
\begin{equation} \label{eq:star}
\Pr_{\bx\sim \yesd_{\ell}^*} \big[ F^\ast(\bx)=1\big]+\Pr_{\by\sim \nod_{\ell}^*}\big[F^\ast(\by)=0\big]\le 1+7^{ {\ell} - 1} \epsilon.
\end{equation}
It follows from Lemmas \ref{lem:trivial1} and \ref{lem:trivial2} that 
  every monotone majority circuit $F$ over $\{0,1\}^{ {(\ell + 1)} \times N_\ell}$ of 
  depth $\ell$ and size at most $M$ satisfies (\ref{eq:condcond2}).
This finishes the induction.

We finish the proof using $(\yesd_d^{\ast},\nod_d^{\ast})$ over $\{0,1\}^{d\times N_{d}^*}$, where $N_d^*=N_d-1< N$.
Given (\ref{eq:star}) on $(\yesd_d^{\ast},\nod_d^{\ast})$ and the fact that $7^{d-1}\epsilon < 1$, no depth-$d$ monotone 
  majority circuit on $\{0,1\}^{ {d} \times N_d^{\ast}}$ of size at most $M$
  can compute $U_{ {d}, N^\ast_d}$ correctly on all inputs, because every string $\xx\sim \yesd_d^{\ast}$ has $\smash{\SUM(\xx)=2^{N_d^{\ast}}}$ and hence $\smash{U_{ {d}, N_d^\ast}(\xx)=1}$, while 
  every string $\yy\sim\nod_d^{\ast}$ has $\smash{\SUM(\yy)=2^{N_d^\ast}-{d}}$ and hence $U_{ {d}, N_d^{\ast}}(\yy)=0.$
Since $N>N_d^*$, this establishes Theorem \ref{thm:lower}.


\section{The Upper Bound: Proof of Theorem \ref{thm:upper}.}\label{s:upper_bounds}

We prove Theorem \ref{thm:upper} in this section. 
We focus on the case when $N^{1/d}>1$ is a positive integer, and 
  give a depth-$d$ monotone majority circuit that computes $U_{k,N}$ and has size at most
\begin{equation}\label{lala}
  {2^{3(N^{1/d}\hspace{0.02cm}\cdot\hspace{0.02cm} \log k+\log N)}}.
\end{equation}
For the general case, we let $n=\lceil N^{1/d}\rceil>1$, and 
  let $s$ denote the smallest integer such that $n^s\ge N$ (so $s\le d$).
Then we first construct a depth-$s$ monotone majority circuit
  that computes $U_{k,n^s}$, and then hard-wire the variables in the last $n^s-N$
  columns to be $0$ to get a circuit for $U_{k,N}$.
The size bound given in the statement of Theorem \ref{thm:upper}
  follows from (\ref{lala}) and the simple facts that $n\le 2N^{1/d}$ 
  and $n^s\le nN\le N^2$.
For the rest of the section we assume that $n=N^{1/d}>1$ is an integer.

First we note that the theorem (with the size bound as given in (\ref{lala}); the same below)
  is trivial if $N<\log k$ since
  implementing $U_{k,N}$ directly using a single $\THR$ gate only takes
  a total weight of $k\cdot 2^N< 2^{3\log k}$.
Assuming that $N\ge \log k$ below, we let $t\in \{1,\ldots,d\}$ denote the 
  smallest integer such that
  $n^t=N^{t/d} \ge \log k.$
We also write $M=n^t$.
It is clear by the choice of $t$ that we have
\begin{equation}\label{hehe}
M\le n \log k.
\end{equation}
With the same reasoning the theorem is trivial if $M=N$.
Below we assume that $t\le d-1$.

We need some notation for our construction.
We say $\mathcal{S} = (S_1, \ldots, S_\ell)$ is an $\ell$-\emph{decomposition} of $[N]$ if there exist
  indices $1 = a_1^- \leq a_1^+ < a_2^- \leq a_2^+ < \ldots < a_\ell^- \leq a_\ell^+ = N$ such that
\begin{itemize}
\item For each $\gamma \in [\ell]$, $S_\gamma = \{a_\gamma^-, a_\gamma^- + 1, \ldots,  a_\gamma^+\}$; and
\vspace{-0.1cm}
\item $\bigcup_{\gamma \in [\ell]} S_\gamma = [N]$.
\end{itemize}
In other words, $\mathcal{S}$ partitions $[N]$ into $\ell$ sequential intervals.

Let $\smash{(x_{i,j})_{i \in [k], j \in [N]}}$ be the set of input variables of $U_{k,N}$. 
Given an $\ell$-decomposition $\calS$,
  we define a sequence of ``conditional'' \emph{carry-bit functions} 
  $\smash{c^{(\gamma)}_{\alpha,\beta}}(x)$, where $0 \leq \alpha,\beta \leq k-1$ and $\gamma \in [\ell]$. Each function $\smash{c^{(\gamma)}_{\alpha,\beta}}$ depends only on the variables $x_{i,j}$ with $j\in S_\gamma$. For convenience, let $B_\gamma = [k] \times S_\gamma$ be the set containing the indices of these variables. Intuitively, for an assignment $x \in \{0,1\}^{k\times N}$, we have 
  $\smash{c^{(\gamma)}_{\alpha,\beta}(x) = 1}$ if and only if {a carry of value at least $\alpha$ is} generated/propagated by the input bits corresponding to $B_\gamma$, assuming this block of variables receives {a carry of value $\beta$}  from the block of variables to the right. Formally,
\begin{equation}\label{eq:base_case}
c^{(\gamma)}_{\alpha,\beta}(x) \eqdef 1 \quad \Longleftrightarrow \quad \sum_{(i,j) \in B_\gamma} 2^{|S_{\gamma}| - (j +1 - a_\gamma^-)} \cdot x_{i,j} + \beta \geq \alpha \cdot 2^{|S_\gamma|}.
\end{equation}
For each $i\in \{0,\ldots,d-t\}$, we write $\calS^{(i)}$ to denote 
  the $n^i$-decomposition in which each set has size $n^{d-i}  $. 
Observe that, for the $1$-decomposition $\smash{\mathcal{S}^{(0)} = (S^{(0)}_1)}$, 
  where $\smash{S^{(0)}_1 = \{1, \ldots, N\}}$, we have  
\begin{equation}\label{eq:correctness}
U_{k,N}(x) = 1 \quad \Longleftrightarrow \quad c^{(1)}_{1,0}(x) = 1,
\end{equation}
for the function $c_{1,0}^{(1)}$ of $\calS^{(0)}$.

Our construction is based on a recursive computation of functions 
  $\smash{c^{(\gamma)}_{\alpha,\beta}(\cdot)}$ associated to different decompositions
  $\calS^{(r)}$, for $r$ from $d-t$ back to $0$, where each decomposition $\calS^{(r+1)}$ is obtained via~a refinement 
  of the previous decomposition $\calS^{(r)}$. 
  More precisely we construct our monotone~majority circuit for $U_{k,N}$ with the following intended behavior. The top gate of the circuit computes the bit $\smash{c^{(1)}_{1,0}(x)}$ associated to the decomposition $\mathcal{S}^{(0)}$. However, this gate does not have access to $x$: it receives as input the output of carry-bit functions $c^{(\gamma)}_{\alpha, \beta}(x)$ corresponding to the finer $n$-decomposition $\smash{\calS^{(1)}}$ in which 
   each block has $\smash{n^{d-1}}$ columns. This then leads to a recursive procedure, which unfolds as a 
   depth-$(d-t+1)$ circuit described in more detail below (recall that $t\ge 1$).

In general  our circuit has $d-t+1$ layers of majority gates, where gates at the $i$th layer compute carry-bit functions  $\smash{c_{\alpha,\beta}^{(\ell)}}$ corresponding to the $n^{ d-t-i+1 }$-decomposition $\calS^{(d-t-i+1)}$. The base~case, i.e. the first layer of majority gates 
  that are supposed to compute $\smash{c_{\alpha,\beta}^{(\ell)}}$ of $\calS^{d-t}$,
  is done by a majority gate that follows directly the definition given in 
  (\ref{eq:base_case}).
It is clear that the size of each gate in the first layer is bounded from above by $k 2^M$.

Due to the recursive nature of our construction, it is sufficient to describe how to compute the carry-bit functions corresponding to a decomposition $\mathcal{S}^{(r)}$ from the carry-bit functions corresponding to $\mathcal{S}^{(r+1)}$ for each $r\in \{0,1,\ldots,d-t-1\}$. For convenience we fix an $r$ below and write $\mathcal{S}'$
  for $\mathcal{S}^{(r)}$ and $\mathcal{S}$ for $\mathcal{S}^{(r+1)}$. We also fix a set $S' \in \mathcal{S}'$ with $S' = S_1 \cup \ldots \cup S_n$,
  where $S_1,\ldots,S_n$ are sets in the ordered tuple $\calS$ listed from left to right. 
We write $c_{u,v}$ to denote a carry-bit function of the block $S$ that 
  we need to compute, for some $u,v\in \{0,\ldots,k-1\}$, and 
  assume that we have already computed $\smash{c^{(\gamma)}_{\alpha,\beta}}$ for each block $S_\gamma$, $\gamma \in [n]$, and for all $\alpha, \beta \in \{0, \ldots,k-1\}$. 
The goal is to compute $c_{u,v}(x)$ given the bits $\smash{c^{(\gamma)}_{\alpha, \beta}(x)}$. 
  
We start with a general observation about carry-bit functions of a block. We say that $(\alpha,\beta)\prec (\alpha',\beta')$ if either $\alpha<\alpha'$, or $\alpha=\alpha'$ and
  $\beta\ge \beta'$. Given a block $S_\gamma$, note that $\smash{c_{\alpha,\beta}^{(\gamma)}}$
  has the following monotonicity property. 
(Note that the assumption of $|S_\gamma|\ge \log k$ always holds given our choice of $t$ and 
  trivial cases ruled out at the beginning of the section.)

\begin{claim}\label{c:carry_monotonicity} Assume that $|S_\gamma| \geq \log k$. If $(\alpha,\beta)\prec (\alpha',\beta')$, then
  $c_{\alpha,\beta}^{(\gamma)}(x) \ge c_{\alpha',\beta'}^{(\gamma)}(x)$ on every input string $x$ for $U_{k,N}$.
\end{claim}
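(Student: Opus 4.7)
The plan is to unwind the definition of $c^{(\gamma)}_{\alpha,\beta}$ and verify the inequality pointwise, splitting into the two sub-cases contained in the definition of the order $\prec$. It will help to abbreviate the weighted input sum by
\[
W_\gamma(x) \;\eqdef\; \sum_{(i,j)\in B_\gamma} 2^{|S_\gamma|-(j+1-a_\gamma^-)}\cdot x_{i,j},
\]
so that $c^{(\gamma)}_{\alpha,\beta}(x)=1$ iff $W_\gamma(x)+\beta \geq \alpha\cdot 2^{|S_\gamma|}$. It suffices to show that $c^{(\gamma)}_{\alpha',\beta'}(x)=1$ implies $c^{(\gamma)}_{\alpha,\beta}(x)=1$ whenever $(\alpha,\beta)\prec(\alpha',\beta')$.

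First I would handle the easy case $\alpha=\alpha'$ and $\beta\geq\beta'$: then $W_\gamma(x)+\beta\geq W_\gamma(x)+\beta'\geq \alpha'\cdot 2^{|S_\gamma|}=\alpha\cdot 2^{|S_\gamma|}$, so $c^{(\gamma)}_{\alpha,\beta}(x)=1$ trivially. The only real content is the case $\alpha<\alpha'$, equivalently $\alpha+1\leq \alpha'$. Here the assumption gives
\[
W_\gamma(x) \;\geq\; \alpha'\cdot 2^{|S_\gamma|} - \beta' \;\geq\; (\alpha+1)\cdot 2^{|S_\gamma|} - (k-1),
\]
using $\beta'\leq k-1$. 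Rearranging, $W_\gamma(x)+\beta \geq \alpha\cdot 2^{|S_\gamma|} + \bigl(2^{|S_\gamma|}-(k-1)+\beta\bigr)$, and I just need the parenthesized quantity to be nonnegative.

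The key (and only nontrivial) point is that the gap of $2^{|S_\gamma|}$ created by decreasing $\alpha$ by at least one must dominate the worst-case swing $k-1$ in the carry-in term $\beta$. This is exactly where the hypothesis $|S_\gamma|\geq \log k$ enters: it yields $2^{|S_\gamma|}\geq k \geq k-1-\beta$ (since $\beta\geq 0$), so $2^{|S_\gamma|}-(k-1)+\beta\geq 0$, completing the inequality $W_\gamma(x)+\beta\geq \alpha\cdot 2^{|S_\gamma|}$ and hence $c^{(\gamma)}_{\alpha,\beta}(x)=1$. I do not anticipate a serious obstacle; the only thing to be careful about is book-keeping the direction of the order $\prec$ (larger $\alpha$ is harder to satisfy, while larger $\beta$ is easier), and making sure the block-size hypothesis $|S_\gamma|\geq\log k$ is used precisely once, to absorb the $k-1$ slack introduced when replacing $\beta'$ by its maximum possible value.
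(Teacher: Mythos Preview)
Your proof is correct and follows essentially the same approach as the paper: both split into the two cases of the order $\prec$, handle $\alpha=\alpha'$ immediately from the definition, and in the $\alpha<\alpha'$ case use the hypothesis $|S_\gamma|\ge\log k$ to ensure that the gain of $2^{|S_\gamma|}$ from lowering $\alpha$ dominates the worst-case swing $k-1$ in the carry-in. The paper phrases the second case as a reduction to the extreme instance $c^{(\gamma)}_{\alpha',k-1}(x)=1 \Rightarrow c^{(\gamma)}_{\alpha'-1,0}(x)=1$, while you compute directly, but the substance is identical.
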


\begin{proof}
We consider the two cases corresponding to the assumption that $(\alpha,\beta)\prec (\alpha',\beta')$. If $\alpha = \alpha'$ and $\beta \geq \beta'$, the claim follows immediately from (\ref{eq:base_case}). 

Assume now that $\alpha < \alpha'$, where $\beta, \beta' \in \{0, \ldots, k-1\}$ are arbitrary. Clearly it suffices to argue that $\smash{c^{(\gamma)}_{\alpha',k-1}(x) = 1}$ implies that $\smash{c^{(\gamma)}_{\alpha'-1,0}(x) = 1}$. Using (\ref{eq:base_case}), this assumption is equivalent to
\begin{equation}\label{eq:proof_claim}
\sum_{(i,j) \in B_\gamma} 2^{|S_{\gamma}| - (j +1 - a_\gamma^-)} \cdot x_{i,j} + (k-1) \geq \alpha' \cdot 2^{|S_\gamma|}.
\end{equation} 
In order to show $c^{(\gamma)}_{\alpha'-1,0}(x) = 1$, we need to verify that 
$$
\sum_{(i,j) \in B_\gamma} 2^{|S_{\gamma}| - (j +1 - a_\gamma^-)} \cdot x_{i,j} \geq (\alpha'-1) \cdot 2^{|S_\gamma|}.
$$
Using (\ref{eq:proof_claim}) it is sufficient to have $k-1 \leq 2^{|S_\gamma|}$. This follows from the assumption in the statement of the claim, which completes the proof.
\end{proof}

The description of the majority gate that computes $c_{u,v}(x)$ 
  for the block $S'$ in $\calS'$ using $c_{\alpha,\beta}^{(\gamma)}(x)$ for
  blocks $S_1,\ldots,S_n$ in $\calS$ is based on the following lemma. 

\begin{lemma} \label{l:inductive_construction} Assume that $|S_\gamma| \geq \log k$ for every $\gamma \in [n]$. Then 
  $c_{u,v}(x)=1$ if and only if
\begin{equation}\label{eq:sum1}
v+\sum_{\gamma=1}^n\left(\left( \sum_{ \alpha = 1,\beta=0}^{k-1} c^{(\gamma)}_{\alpha,\beta}(x)\right) \cdot 
  k^{n-\gamma}\right)\ge u \cdot k^n.
\end{equation}
\end{lemma}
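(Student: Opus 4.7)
My plan is to re-interpret the double sum $\sum_{\alpha=1}^{k-1}\sum_{\beta=0}^{k-1} c^{(\gamma)}_{\alpha,\beta}(x)$ combinatorially and then induct on $n$. First, let $W_\gamma(x) := \sum_{(i,j) \in B_\gamma} 2^{|S_\gamma|-(j+1-a_\gamma^-)}\, x_{i,j}$ and $C_\gamma(\beta;x) := \lfloor (W_\gamma(x)+\beta)/2^{|S_\gamma|}\rfloor$ (the actual carry out of block $S_\gamma$ on incoming carry $\beta$). Definition (\ref{eq:base_case}) then reads $c^{(\gamma)}_{\alpha,\beta}(x) = 1 \iff C_\gamma(\beta;x) \geq \alpha$; combined with the monotonicity in $\alpha$ from Claim \ref{c:carry_monotonicity} and the bound $C_\gamma(\beta;x) \leq k-1$ (using $W_\gamma \leq k(2^{|S_\gamma|}-1)$ and $\beta \leq k-1$), this yields $\sum_{\alpha=1}^{k-1} c^{(\gamma)}_{\alpha,\beta}(x) = C_\gamma(\beta;x)$ and hence $A_\gamma(x) := \sum_{\alpha,\beta} c^{(\gamma)}_{\alpha,\beta}(x) = \sum_{\beta=0}^{k-1} C_\gamma(\beta;x)$. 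The lemma is therefore reduced to showing $c_{u,v}(x) = 1 \iff v + \sum_{\gamma=1}^n A_\gamma(x)\, k^{n-\gamma} \geq u k^n$.

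For the base case $n=1$, writing $W_1 = q\cdot 2^{|S_1|}+r$ with $r \in \{0,\ldots,2^{|S_1|}-1\}$, the hypothesis $|S_1| \geq \log k$ (so $k \leq 2^{|S_1|}$) forces the correction $A_1 - kq = \max(0, r+k-2^{|S_1|})$ to lie in $\{0,\ldots,k-1\}$. A brief case analysis on whether $r+v < 2^{|S_1|}$ or $r+v \geq 2^{|S_1|}$ then confirms $v+A_1 \geq uk \iff W_1+v \geq u\cdot 2^{|S_1|}$, which is exactly $c_{u,v}(x)=1$.

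For the inductive step, split $S' = S_1 \cup S''$ with $S''=S_2 \cup \cdots \cup S_n$ and define $C_2^\star := C_2(C_3(\cdots C_n(v;x)\cdots);x)$, the true carry out of $S''$ on incoming $v$ (inductively in $\{0,\ldots,k-1\}$). Applying the inductive hypothesis to the carry-bit functions $c^{(S'')}_{u',v}$ for all $u' \in \{0,\ldots,k-1\}$ gives
\[
v + \sum_{\gamma=2}^n A_\gamma(x)\, k^{n-\gamma} \;=\; C_2^\star \cdot k^{n-1} + \delta, \qquad \delta \in [0,\, k^{n-1}).
\]
Adding $A_1 k^{n-1}$ to both sides turns (\ref{eq:sum1}) into $(A_1+C_2^\star)k^{n-1}+\delta \geq uk^n$, which (since $\delta < k^{n-1}$ and $u, A_1, C_2^\star$ are integers) is equivalent to $A_1+C_2^\star \geq uk$. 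Invoking the base case once more, this time on block $S_1$ with incoming carry $v' := C_2^\star \leq k-1$, yields $A_1+C_2^\star \geq uk \iff C_1(C_2^\star;x) \geq u$, and by definition of the carry chain this last inequality is precisely $c_{u,v}(x)=1$.

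The main obstacle is the base case: aligning base-$k$ arithmetic on the LHS of (\ref{eq:sum1}) with base-$2^{|S_1|}$ arithmetic on the RHS depends on the hypothesis $|S_\gamma| \geq \log k$, which is exactly what keeps the ``correction'' term in $\{0,\ldots,k-1\}$ and ensures the two thresholds fire at the same inputs; the induction then slots the chain of carry functions together cleanly, invoking the base case a second time to handle the topmost block.
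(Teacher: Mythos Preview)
Your argument is correct, but it is organized quite differently from the paper's proof. The paper does a single induction on the block index $\gamma$, from $n$ down to $1$: letting $p_\gamma$ denote the base-$k$ carry from column $\gamma$ to column $\gamma-1$ in the sum~(\ref{eq:sum1}) and $q_\gamma$ the actual (base-$2$) carry out of block $S_\gamma$ when the input is added with incoming value $v$, it shows $p_\gamma=q_\gamma$ for all $\gamma$, which immediately gives the equivalence. The case analysis in each step of that induction is essentially your base case in disguise (the paper's identity $\sum_{\alpha,\beta} c^{(\gamma)}_{\alpha,\beta}=(\alpha_\gamma-1)k+(k-\beta_\gamma)$ is exactly your formula $A_\gamma=kq+\max(0,r+k-2^{|S_\gamma|})$ written in terms of the maximal pair $(\alpha_\gamma,\beta_\gamma)$). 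By contrast, you isolate the clean identity $\sum_{\alpha=1}^{k-1} c^{(\gamma)}_{\alpha,\beta}=C_\gamma(\beta)$ up front, prove a standalone $n=1$ lemma, and then induct on the number of blocks, invoking the $n=1$ case a second time at the top. Your route is more modular (the base-case lemma is reusable, and the inductive step is a one-liner once you have the representation $v+\sum_{\gamma\ge 2}A_\gamma k^{n-\gamma}=C_2^\star k^{n-1}+\delta$), while the paper's is more economical (one induction, no separate lemma). One small point worth making explicit in your inductive step: when $C_2^\star=k-1$ the inductive hypothesis only gives the lower bound on $v+\sum_{\gamma\ge 2}A_\gamma k^{n-\gamma}$, so you should note the easy upper bound $v+\sum_{\gamma\ge 2}A_\gamma k^{n-\gamma}\le (k-1)+k(k-1)\sum_{\gamma=2}^n k^{n-\gamma}=k^n-1<k^n$ to conclude $\delta<k^{n-1}$ in that edge case.
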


\begin{proof} 
We consider (\ref{eq:sum1}) as a sum in base $k$ over $k(k-1)$ rows and $n$ columns of 
  variables, with $v$ extra $1$'s on column $n$ (which corresponds to the least significant position). 
Let $p_\gamma$ denote the (base $k$) carry from column $\gamma$ to column $\gamma-1$ in (\ref{eq:sum1}),
  and let $q_\gamma$ denote the (base $2$) carry from block $\gamma$ to block $\gamma-1$ in our decomposition of $x$ after adding $v$ to
  block $n$ (without taking into account the remaining columns of $x$ not covered by $S_1 \cup \ldots \cup S_n$). 
  
  We prove by induction that $p_\gamma=q_\gamma$, for all $\gamma$ from $n$ to $1$. Notice that this establishes the lemma. For the basis when $\gamma=n$, we consider the following two cases: 
\begin{flushleft}\begin{enumerate}
\item If $c_{\alpha,\beta}^{(n)} = 0$ for all $\alpha \geq 1$ and $\beta \geq 0$, then $q_{n}=0$
  (as we have $c_{1,k-1}^{(n)}=0$ and $v\le k-1$).
This implies that $p_n=q_n=0$.
  
\item Otherwise, let $(\alpha_n,\beta_n)$ denote the largest pair (under $\prec$ defined earlier)
  with $c_{\alpha_n,\beta_n}^{(n)}=1$. It follows from Claim \ref{c:carry_monotonicity} that
  $q_n = \alpha_n$ if $\beta_n\le v$, and $q_n = \alpha_n-1$ if $\beta_n>v$.
We also have
$$
v+\sum_{\alpha = 1,\beta= 0}^{k-1} c^{(n)}_{\alpha,\beta}= (\alpha_n-1) \cdot k+(k-\beta_n+v).
$$
It follows from this equation and the characterization of $q_n$ that the (base $k$) carry $p_n=q_n$.
\end{enumerate}\end{flushleft}

The induction step is similar. We assume that $p_{\gamma+1}=q_{\gamma+1}$, and prove that $p_{\gamma}=q_{\gamma}$.
We focus on the $\gamma$-th column from (\ref{eq:sum1}) and block $\gamma$, and consider the following two cases:
\begin{flushleft}\begin{enumerate}
\item If $c_{\alpha,\beta}^{(n)} = 0$ for all $\alpha \geq 1$ and $\beta \geq 0$, then $q_\gamma=0$
  (as we have $c_{1,k-1}^{(\gamma)}=0$ and $q_{\gamma+1}\le k-1$).
This implies that $p_\gamma=q_\gamma=0$.
\item Otherwise, let $(\alpha_\gamma,\beta_\gamma)$ denote the largest pair 
  with $c_{\alpha_\gamma,\beta_\gamma}^{(\gamma)}=1$. Using Claim \ref{c:carry_monotonicity}, $q_\gamma$ is $\alpha_\gamma$ if $\beta_\gamma\le q_{\gamma+1}$, and $q_\gamma$ 
  is $\alpha_\gamma-1$ if $\beta_\gamma>q_{\gamma+1}$.
Using the inductive hypothesis, we have
$$
p_{\gamma+1}+\sum_{\alpha = 1,\beta= 0}^{k-1} c^{(n)}_{\alpha,\beta}= (\alpha_\gamma-1) \cdot k+(k-\beta_\gamma+q_{\gamma+1}).
$$
It follows from this equation and the characterization of $q_\gamma$ that $p_\gamma=q_\gamma$.
\end{enumerate}\end{flushleft}
This finishes the induction, and the proof of the lemma.
\end{proof}

Lemma \ref{l:inductive_construction}, (\ref{eq:base_case}), and our previous discussions complete the description of the circuit for $U_{k,N}$. Moreover, its correctness follows easily from (\ref{eq:correctness}) and Lemma \ref{l:inductive_construction}. It remains to analyze the size of the resulting depth-$(d-t+1)$ majority circuit. 

We upper bound its size layer by layer as follows. 
As discussed earlier, the size of each majority gate in the first layer 
  is at most $k2^M$, and there are $n^{d-t}$ many of them.
Furthermore, for the $i$-th layer of the circuit, where $i>1$, there are $n^{(d-t-i+1)}$ gates
  each of which has size at most $$k(k-1)\cdot \frac{k^n-1}{k-1}<k^{n+1},$$  as given in Lemma \ref{l:inductive_construction}. Using (\ref{hehe}), the majority circuit for $U_{k,N}$ has overall size at most 
$$
n^{ d-t } \cdot k2^M+\sum_{i=2}^{d-t+1} n^{d-t+1-i } \cdot k^{n+1}\le
Nk2^M+2Nk^{n+1}\le 
  2^{3(N^{1/d} \log k + \log N)}.
$$

The construction presented here uses $\THR$ gates and majority circuits.
We sketch in Appendix \ref{s:connectivity} an alternative construction 
  with respect to semi-unbounded fan-in $\mathsf{AND}$/$\mathsf{OR}$ circuits.


\section{Strengthening the Ajtai-Gurevich Result: Proof of Theorem \ref{t:strong_AG}.}\label{s:strong_AG}

We require the following  lemma:

\begin{lemma} \label{lem:add}
For a suitable absolute constant $0 < c < 1$, letting $k= (\log N)^c$, the function $U_{k,N}$ is computed by a poly$(N)$-size  $\AND/\OR/\NOT$
circuit of depth \emph{3}.
\end{lemma}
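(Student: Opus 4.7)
The plan is to construct a polynomial-size, depth-$3$ $\AND/\OR/\NOT$ circuit for $U_{k,N}$ with $k = (\log N)^c$ and $c < 1$, by generalizing the classical depth-$3$ carry-lookahead identity for two-operand addition,
\[
U_{2,N}(x) \;=\; \bigvee_{j \in [N]} \Big( (x_{1,j}\wedge x_{2,j}) \wedge \bigwedge_{i<j}(x_{1,i}\vee x_{2,i}) \Big),
\]
combined with the block-decomposition idea used in the proof of Theorem~\ref{thm:upper}.

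First I would partition the $N$ bit positions into $B = \lceil N/b\rceil$ consecutive blocks of size $b$, chosen so that $kb = O(\log N)$; for instance, $b = \lceil(\log N)^{1-c'}\rceil$ for a constant $c'\in(c,1)$ yields $kb = (\log N)^{c+1-c'} = o(\log N)$. With this choice, any Boolean function of the $kb$ inputs feeding a single block is computable by a CNF of size $2^{kb} = \poly(N)$ at depth~$2$. In particular, for each block $\ell\in[B]$ and each pair $(\alpha,\beta)\in\{0,\ldots,k\}^2$, the block-local predicate $P^{(\ell)}_{\alpha,\beta}(x)$ asserting that ``block $\ell$ transforms an incoming carry of value $\beta$ into an outgoing carry of value exactly $\alpha$'' is realized by such a CNF.

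Next I would express $U_{k,N}(x) = 1$ as a disjunction over ``witnesses'', where a witness $(\alpha_0,\ldots,\alpha_B) \in \{0,\ldots,k\}^{B+1}$ with $\alpha_B = 0$ and $\alpha_0 \geq 1$ describes the carry values crossing successive block boundaries:
\[
U_{k,N}(x) \;=\; \bigvee_{(\alpha_0,\ldots,\alpha_B)\in\calW} \bigwedge_{\ell\in[B]} P^{(\ell)}_{\alpha_{\ell-1},\alpha_\ell}(x).
\]
Since the inner $\bigwedge$ of polynomial-size CNFs is itself a CNF of polynomial size (an AND of ORs of literals, depth~$2$), and the outer $\bigvee$ contributes one OR-level at the top, the resulting formula has depth exactly~$3$.

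The main obstacle is controlling $|\calW|$: a naive enumeration over all carry sequences gives $(k+1)^B$ witnesses, which is super-polynomial. The crux of the proof is to choose $\calW$ as a polynomial-size family of carry sequences that is ``covering'' in the sense that for every $x$ with $U_{k,N}(x) = 1$ there is some $(\alpha_0,\ldots,\alpha_B)\in \calW$ consistent with the unique carry profile induced by $x$. I would approach this by exploiting the recursive structure of the carry computation together with the fact that $k = (\log N)^c$ is polylogarithmic, and by devising a compression scheme that identifies each carry profile with a short ``canonical'' signature of $O(\log N)$ bits of essential information, yielding $|\calW| \leq 2^{O(\log N)} = \poly(N)$. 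Making this compression rigorous is the main technical challenge, and I expect it to rely on an analysis analogous to the inductive decomposition used in the proof of Theorem~\ref{thm:upper}.
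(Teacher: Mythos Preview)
Your proposal has a genuine gap at exactly the point you flag as the ``main technical challenge'': the polynomial-size covering family $\calW$ of carry sequences does not exist in the form you describe, and the analogy with Theorem~\ref{thm:upper} does not help here.  With the exact-carry block predicates $P^{(\ell)}_{\alpha_{\ell-1},\alpha_\ell}$ you defined, the conjunction $\bigwedge_\ell P^{(\ell)}_{\alpha_{\ell-1},\alpha_\ell}(x)$ is satisfied by $x$ only when $(\alpha_0,\dots,\alpha_B)$ is the \emph{unique} carry profile of $x$.  Hence $\calW$ must contain every carry profile that arises from some yes-input, and there are $k^{\Omega(B)}$ such profiles (for instance, by independently choosing in each block whether the outgoing carry is $0$ or $1$); with $B = \Theta(N/b)$ and $b = (\log N)^{1-c'}$ this is $2^{\Omega(N/\polylog N)}$, not polynomial.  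Relaxing to ``at-least'' carry predicates $c^{(\ell)}_{\alpha,\beta}$ as in Section~\ref{s:upper_bounds} does not fix this either: to chain blocks you still need to track the actual carry value, which is why the construction there produces a depth-$(d-t+1)$ majority circuit (or, in the appendix, an $O(\log N)$-depth connectivity circuit) rather than a depth-$3$ one.

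The paper avoids this obstacle by a completely different reduction: it applies the carry-save (``3-to-2'') trick $O(\log k)$ times to collapse the $k$ summands into just \emph{two} integers $y,z$ with $y+z = x^{(1)}+\cdots+x^{(k)}$.  The point is not to build the circuit for this collapse explicitly, but to observe that each bit $y_i,z_i$ depends on at most $3^{O(\log k)}$ original input bits; choosing $c<1$ small enough makes this at most $\log N$.  Now one is back to the two-operand carry-lookahead identity you wrote down, whose top $\OR$ has only $O(N)$ disjuncts, and each generate/propagate bit $g_j,p_j$ --- being a function of $O(\log N)$ input bits --- has a $\poly(N)$-size CNF.  This yields an $\OR$--$\AND$--$\OR$ circuit of polynomial size.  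The missing idea in your plan is precisely this reduction to two operands via carry-save addition; without it, the witness enumeration is inherently super-polynomial.
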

\begin{proof}  Recall the well-known technique of carry-save addition, also known as the ``3-to-2 trick,'' for addition of binary numbers (see e.g., Section 1.2.3 of \cite{Leighton:92}).  This ``trick'' states that there is a (multi-output) circuit that takes as input three $n$-bit binary numbers $X,Y,Z$ and outputs two $(n+1)$-bit binary numbers $A,B$ such that (\emph{i}) $A+B=X+Y+Z$, and (\emph{ii}) each output bit $A_i$ or $B_i$ depends on at most 3 of the input bits.  By applying this trick in parallel to the $N$-bit integers $x^{(1)},\dots,x^{(k)}$ that are the rows of the input to $U_{k,N}$, we obtain $\lceil 2k/3 \rceil$ many $(N+1)$-bit integers whose sum equals $x^{(1)} + \cdots + x^{(k)}$.  Recursing $O(\log k)$ times, we see that there are two $(N+O(\log  k))$-bit integers (call them $y$ and $z$) such that $y+z= x^{(1)} + \cdots + x^{(k)}$.  A naive composition of these ``3-to-2 trick'' circuits in a tree of depth $O(\log  k)$ to compute $y,z$ would yield a circuit of depth $\Theta(\log \log N)$.  To avoid this blowup in circuit depth, we proceed differently, by observing that each each bit $y_i,z_i$ depends on at most $3^{O(\log  k)} \leq  \log N$ of the original input bits of the $x^{(i)}$'s, and exploiting this locality to get a depth-3 circuit overall.

In more detail, let $y_i$ denote the bit in the ``$2^i$-position'' of the binary representation of $y$, so
\[
y = \sum_{i=0}^{N + O(\log k)} y_i \cdot 2^i \quad\ \  \text{and similarly} \ \ \quad
z = \sum_{i=0}^{N + O(\log k)} z_i \cdot 2^i.
\]
We define ``generate'' and ``propagate'' bits for each bit position of $y+z$ in the standard way,
\[
g_i \eqdef y_i \wedge z_i \quad \text{and} \quad p_i \eqdef y_i \vee z_i,
\]
so $g_i=1$ iff the bits in the $2^i$-position generate a carry into the $2^{i+1}$-position, and $p_i=1$ iff the bits in the $2^i$-position propagate an incoming carry into the $2^i$-position onward to the $2^{i+1}$-position.  Observe that each $p_i,g_i$ depends on at most $2\log N$ of the original input bits.

The sum $y+z$ is at least $2^N$ if and only if either of the following events hold:

\begin{itemize}
\item Event $A$:  at least one of the bits $y_N,y_{N+1},\dots,z_N,z_{N+1},\ldots$ is 1.  This can be expressed as
\[
A = \bigvee_{j \geq N} (y_j \vee z_j).
\]
Since $y_j,z_j$ each depend on at most $\log N$ of the original input variables, each of them can be expressed as a $\poly(N)$-size DNF over the original input variables, and thus $A$ can be expressed as a $\poly(N)$-size DNF.
\item Event $B$:  a carry bit is propagated into the $2^N$-position.
Event $B$ can be expressed as
\[
B = \bigvee_{j=1}^{N-1} \left(g_j \wedge \left( \bigwedge_{j < i < N}p_i\right)\right).
\]
As each $p_i$ depends on at most $2\log N$ of the original input variables, it can be expressed~as a $\poly(N)$-size CNF; the same holds for $g_j$, so $$\left(g_j \wedge \left( \bigwedge_{j < i < N}p_i\right)\right)$$ can be expressed as a $\poly(N)$-size CNF, and thus Event $B$ can be expressed as a $\poly(N)$-size depth-3 $\OR$-$\AND$-$\OR$ circuit.

\end{itemize}

As a consequence, $A \vee B$ can be expressed as a $\poly(N)$-size depth-3 $\OR$-$\AND$-$\OR$ circuit over the original input variables, and the lemma is proved.
\end{proof}

\noindent \emph{Proof of Theorem \ref{t:strong_AG}:}
We take $g_{N} \eqdef U_{k,N}$, where $k=(\log N)^c$ as in Lemma \ref{lem:add}. Then Part (\emph{i}) of the theorem follows from Lemma  \ref{lem:add}.  Part (\emph{ii}) follows from our main lower bound, Theorem \ref{thm:lower}, by observing that any circuit for $U_{k,N}$ yields a circuit for $U_{ {d},N}$ (by setting the last $ {k - d}$ rows of the input to $0$).
\qed

\bibliographystyle{alpha}


\appendix


\section{Upper Bound for the Universal Monotone Threshold Gate.}\label{s:connectivity}

We sketch in this section a construction of monotone circuits for the universal monotone threshold function that matches the parameters obtained by Beimel and Weinreb \cite{DBLP:conf/coco/BeimelW05}. More precisely, we describe a polynomial size $O(\log N)$-depth $\mathsf{AND}$/$\mathsf{OR}$ circuit for $U_{O(N), O(N\log N)}$, where $\mathsf{OR}$ gates have unbounded fan-in, while $\mathsf{AND}$ gates have fan-in two.

Our construction relies on a more general reduction from $U_{k,N}$ to a certain graph connectivity problem. We start with an $\ell$-decomposition $\mathcal{S}$ of $U_{k,N}$ (see Section \ref{s:upper_bounds} for more details), and assume (for now) that we are given the corresponding (conditional) carry-bit functions $c^{(\gamma)}_{\alpha, \beta}(x)$, where $\alpha$ and $\beta$ are in  $\{0, \ldots, k - 1\}$, and $\gamma \in [\ell]$.

Given these bits, we can view them as a layered directed graph $G_{\mathcal{S},x} = (V,E)$ which depends on $x$ and $\mathcal{S}$ as follows. The vertices of $G$ are partitioned into $\ell + 1$ layers, which we number for convenience from $\ell$ to $0$. The first and last layers are special, and contain a single vertex only. The remaining layers each contain $k$ vertices. The (directed) edges of this graph leave the $\gamma$-th layer and reach the $(\gamma -1)$-th layer. We use the output bit of each function $c^{(\gamma)}_{\alpha,\beta}$ to decide whether an edge is present in this graph. The idea is that there will be a path from the $\ell$-th layer to the $0$-th layer if and only if $U_{k,N}(x) = 1$.   

More precisely, we view $V = L_\ell \cup L_{\ell - 1} \cup \ldots \cup L_0$, where $L_\ell = \{s\}$, $L_0 = \{t\}$, and $L_\gamma = \{v_{\gamma,0}, \ldots, v_{\gamma,k-1}\}$, for $\ell > \gamma > 0$. The edge set $E \subseteq V \times V$ is defined as follows.
\begin{itemize}
\item $(s,v_{\ell - 1, j}) \in E$ if and only if $c^{(\ell)}_{1,j} = 1$, where $j \in \{0, \ldots, k-1\}$;
\item $(v_{1,j}, t) \in E$ if and only if $c^{(1)}_{j,0} = 1$, where $j \in \{0, \ldots, k-1\}$;
\item For $\ell - 1 \geq \gamma \geq 2 $ and $0 \leq \alpha, \beta \leq k-1$, $(v_{\gamma, \alpha}, v_{\gamma - 1, \beta}) \in E$ if and only if $c^{(\gamma)}_{\alpha, \beta} = 1$;
\item There is no other edge in $E$.
\end{itemize}

Given vertices $u,v$ in a graph $G$, we write $u \leadsto v$ if there exists a directed path from $u$ to $v$ in $G$. Our construction is based on the following observation.
 
\begin{lemma}\label{l:path_characterization} Given an $\ell$-decomposition $\mathcal{S}$ for $U_{k,N}$ and an input $x$,
$$
U_{k,N}(x) = 1 \quad \Longleftrightarrow \quad s \leadsto t~\textit{in}~G_{\mathcal{S},x}.
$$
\end{lemma}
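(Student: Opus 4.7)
The plan is to interpret a directed $s$-to-$t$ path in $G_{\mathcal{S},x}$ as a sequence of block-wise carry values that collectively certify $\SUM(x) \geq 2^N$, and conversely to extract such a sequence from the actual carries produced by standard block-by-block binary addition whenever $U_{k,N}(x) = 1$.

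For the $(\Leftarrow)$ direction, I would take a path $s \to v_{\ell-1,\beta_{\ell-1}} \to \cdots \to v_{1,\beta_1} \to t$, adopt the endpoint conventions $\beta_\ell := 1$ and $\beta_0 := 0$, and read off the defining condition of each edge. After an appropriate relabeling of the $\beta$'s to match the decomposition's block ordering, every edge yields via \eqref{eq:base_case} an inequality
\[
N_\gamma(x) + \tilde{\beta}_\gamma \;\geq\; \tilde{\beta}_{\gamma-1} \cdot 2^{|S_\gamma|}, \quad \gamma \in [\ell],
\]
where $N_\gamma(x) := \sum_{(i,j) \in B_\gamma} 2^{|S_\gamma|-(j+1-a_\gamma^-)} x_{i,j}$ is the local value of block $\gamma$ and $\tilde{\beta}_0 = 1$, $\tilde{\beta}_\ell = 0$. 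I then multiply the $\gamma$-th inequality by the positional weight $2^{N - a_\gamma^+}$ of block $\gamma$ in the expansion $\SUM(x) = \sum_\gamma 2^{N - a_\gamma^+}\, N_\gamma(x)$ and sum over $\gamma$. Using the contiguity identity $|S_\gamma| + (N - a_\gamma^+) = N - a_{\gamma-1}^+$ (with the convention $a_0^+ := 0$), the intermediate $\tilde{\beta}_\gamma$-terms for $1 \leq \gamma \leq \ell-1$ cancel between the two sides, and the surviving contribution reduces to $\SUM(x) \geq \tilde{\beta}_0 \cdot 2^N = 2^N$.

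For the $(\Rightarrow)$ direction, I would assume $U_{k,N}(x) = 1$ and perform ordinary block-by-block binary addition on the decomposition $\mathcal{S}$, starting with carry $0$ at the least significant end and recording each actual outgoing carry via the recurrence $q \leftarrow \lfloor (N_\gamma(x) + q)/2^{|S_\gamma|}\rfloor$. A one-line induction shows $q \leq k - 1$ throughout, since $N_\gamma(x) + q \leq k(2^{|S_\gamma|} - 1) + (k - 1) < k \cdot 2^{|S_\gamma|}$, so each recorded carry is a valid index for a middle-layer vertex of $G_{\mathcal{S},x}$; meanwhile, $\SUM(x) \geq 2^N$ forces the final carry (out of the most significant block) to be at least $1$. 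By construction, each internal edge of the candidate path whose middle vertices are indexed by the recorded carries is present directly from the definition of the $c^{(\gamma)}_{\cdot,\cdot}$'s. The one edge that requires extra care is the one incident to $s$, whose defining condition uses the fixed threshold $1$ in place of the actual final carry; this is supplied by Claim~\ref{c:carry_monotonicity} together with the $\geq 1$ lower bound on that final carry.

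The main obstacle will be the bookkeeping: correctly aligning the direction of carry propagation with the graph's layer indexing, and verifying that the weighted sum in the first direction collapses exactly to $2^N$ with no off-by-one slack from the positional shifts $a_\gamma^+$. Once that alignment is in place, both directions reduce to direct applications of \eqref{eq:base_case} and Claim~\ref{c:carry_monotonicity}.
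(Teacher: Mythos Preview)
Your approach matches the paper's sketch: the $(\Rightarrow)$ direction via the actual block-by-block carries is exactly what the paper does, and your telescoping sum for $(\Leftarrow)$ is a fully explicit version of the paper's one-line ``inductive argument starting from $t$ and proceeding backwards to $s$.''

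One caveat worth flagging. The edge definitions in the appendix, read literally, have the two subscripts of each $c^{(\gamma)}_{\cdot,\cdot}$ transposed relative to the intended carry semantics: e.g.\ $(v_{1,0},t)$ is present iff $c^{(1)}_{0,0}=1$, which always holds, so with $\ell=2$ the path $s\to v_{1,0}\to t$ exists whenever the least significant block alone produces a carry, regardless of $U_{k,N}(x)$. Your ``appropriate relabeling'' is therefore doing more than reindexing---it is silently correcting this transposition---and your target inequalities $N_\gamma(x)+\tilde\beta_\gamma\ge\tilde\beta_{\gamma-1}\cdot 2^{|S_\gamma|}$ with $\tilde\beta_0=1$, $\tilde\beta_\ell=0$ are precisely the right ones for the intended graph (swap $\alpha\leftrightarrow\beta$ in every edge condition). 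With that correction in place the telescoping goes through exactly as you outline. Relatedly, the edge carrying the ``final carry $\ge 1$'' condition is then the one \emph{into} $t$ (via $c^{(1)}_{1,\cdot}$), not the one out of $s$ as you wrote; your appeal to Claim~\ref{c:carry_monotonicity} should be redirected there---or simply dropped, since $c^{(1)}_{1,q_2}=1$ already follows directly from $q_1\ge 1$.
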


\begin{proof} We provide a sketch of the argument. If $U_{k,N}(x)= 1$, consider the sequence of carries generated during the actual computation of $\sum_{i \in [k]} x^{(i)}$ by the standard binary addition algorithm. At least one final carry is generated in this process, since the sum is at least $2^N$. The correct carry values computed during intermediate steps of the addition algorithm correspond to a path from $s$ to $t$ in $G_{\mathcal{S},x}$. On the other hand, if there exists a path from $s$ to $t$ in this graph, then an inductive argument starting from $t$ and proceeding backwards to $s$ shows that, during each step of the addition algorithm, at least some number of carries must be produced when we add the integers $x^{(1)}, \ldots, x^{(k)}$. In particular, there must be at least one final carry bit, which implies that $U_{k,N}(x)= 1$. 
\end{proof}

To sum up, in order to compute $U_{k,N}$ from the carry-bit functions it is enough to solve a directed $s$-$t$-connectivity problem on a graph with $O(N)$ layers, where each layer contains $O(k)$ vertices. 

 The computation of the carry-bit functions can be done efficiently in the case of the universal monotone threshold function if we start with an $\Omega(N \log N)$-decomposition. More precisely, each such function can be written as a monotone majority gate over a polynomial number of input bits, which is known to admit efficient monotone circuits as needed in our construction. 
 
Finally, the upper bound follows from the well-known construction of monotone circuits for $s$-$t$-connectivity on layered graphs via divide-and-conquer.

\end{document}